\newtheorem{example}{Example}
\newtheorem{lemma}{Lemma}
\newtheorem{remark}{Remark}
\newtheorem{definition}{Definition}
\let\exampleOrig\endexample
\def\endexample{\hspace*{0pt}\hfill$\triangleleft$\exampleOrig}
\newcommand{\REFlem}[1]{\text{Lemma~\ref{#1}}}
\newcommand{\REFdef}[1]{Definition~\ref{#1}}
\newcommand{\REFsec}[1]{Section~\ref{#1}}
\newcommand{\deff}{:=}
\newcommand{\BR}[1]{\left( #1 \right)}
\newcommand{\ON}[1]{\operatorname{#1}}
\def\clap#1{\hbox to 0pt{\hss#1\hss}}
\newcommand{\val}[1]{\ensuremath{\mathsf{#1}}}
\newcommand{\DiCases}[4]{\ensuremath{\begin{cases}%
#1&,~#2\\%
#3&,~#4%
\end{cases}}}%
\newif\ifFIRST
\newif\ifSECOND
\let\LISTOP\relax
\newcommand{\List}[4][\;]{#3#1%
	\FIRSTtrue
	\@for\i:=#2\do{%
	\ifFIRST\LISTOP{\i}\FIRSTfalse\else,\LISTOP{\i}\fi%
	}%
	#1#4%
	\let\LISTOP\relax
}
\newcounter{DINGLIST}
\newcommand{\markD}[3][\;\;]{\text{\ding{\the\numexpr171+\theDINGLIST}\stepcounter{DINGLIST}}#1#3}
\newcommand{\ZZ}{\textsl{Zu Zeigen:}~\@ifstar\ZZStar\ZZNoStar}
\newcommand{\ZZStar}[1]{\begin{align*}#1\end{align*}}
\newcommand{\ZZNoStar}[1]{\ensuremath{#1}}
\newcommand{\THATIS}{i.e.\xspace}
\newcommand{\SUCHTHAT}{s.t.\xspace}
\newcommand{\IFF}{\unskip~\text{iff}~}
\newcommand{\SHOW}[2][]{Show \ifthenelse{\isempty{#1}}{}{#1, \THATIS, }\ensuremath{#2}:}
\newcommand{\SORRY}[1]{\emph{\color{red}Sorry: #1}\@latex@warning{SORRY: #1}}
\newcommand{\propNeg}{\@ifstar\propNegStar\propNegNoStar}
\newcommand{\propNegStar}[1]{\ensuremath{\left(\propNegNoStar{#1}\right)}}
\newcommand{\propNegNoStar}[2][\cdot]{\ensuremath{\neg\ifthenelse{\isempty{#2}}{#1}{#2}}}
\newcommand{\propConj}{\@ifstar\propConjStar\propConjNoStar}
\newcommand{\propConjStar}[2]{\ensuremath{\left(\propConjNoStar{#1}{#2}\right)}}
\newcommand{\propConjNoStar}[3][\cdot]{\ensuremath{\ifthenelse{\isempty{#2}}{#1}{#2}\wedge\ifthenelse{\isempty{#3}}{#1}{#3}}}
\newcommand{\propDisj}{\@ifstar\propDisjStar\propDisjNoStar}
\newcommand{\propDisjStar}[2]{\ensuremath{\left(\propDisjNoStar{#1}{#2}\right)}}
\newcommand{\propDisjNoStar}[3][\cdot]{\ensuremath{\ifthenelse{\isempty{#2}}{#1}{#2}\vee\ifthenelse{\isempty{#3}}{#1}{#3}}}
\newcommand{\propImp}{\@ifstar\propImpStar\propImpNoStar}
\newcommand{\propImpStar}[2]{\ensuremath{\left(\propImpNoStar{#1}{#2}\right)}}
\newcommand{\propImpNoStar}[3][\cdot]{\ensuremath{\ifthenelse{\isempty{#2}}{#1}{#2}\Rightarrow\ifthenelse{\isempty{#3}}{#1}{#3}}}
\newcommand{\propAequ}{\@ifstar\propAequStar\propAequNoStar}
\newcommand{\propAequStar}[2]{\ensuremath{\left(\propAequNoStar{#1}{#2}\right)}}
\newcommand{\propAequNoStar}[3][\cdot]{\ensuremath{\ifthenelse{\isempty{#2}}{#1}{#2}\Leftrightarrow\ifthenelse{\isempty{#3}}{#1}{#3}}}
\newcommand{\propXOR}{\@ifstar\propXORStar\propXORNoStar}
\newcommand{\propXORStar}[2]{\ensuremath{\left(\propXORNoStar{#1}{#2}\right)}}
\newcommand{\propXORNoStar}[3][\cdot]{\ensuremath{\ifthenelse{\isempty{#2}}{#1}{#2}\oplus\ifthenelse{\isempty{#3}}{#1}{#3}}}
\newcommand{\AllQ}{\@ifstar\AllQStar\AllQNoStar}
\newcommand{\AllQStar}[3][\;]{\ensuremath{\left(\forall #2#1.#1#3\right)}}
\newcommand{\AllQNoStar}[3][\;]{\ensuremath{\forall #2#1.#1#3}}
\newcommand{\AllQu}{\@ifstar\AllQuStar\AllQuNoStar}
\newcommand{\AllQuStar}[3][\;]{\ensuremath{\left(\forall^{\infty} #2#1.#1#3\right)}}
\newcommand{\AllQuNoStar}[3][\;]{\ensuremath{\forall^{\infty} #2#1.#1#3}}
\newcommand{\ExQ}{\@ifstar\ExQStar\ExQNoStar}
\newcommand{\ExQStar}[3][\;]{\ensuremath{\left(\exists #2#1.#1#3\right)}}
\newcommand{\ExQNoStar}[3][\;]{\ensuremath{\exists #2#1.#1#3}}
\newcommand{\NExQ}{\@ifstar\NExQStar\NExQNoStar}
\newcommand{\NExQStar}[3][\;]{\ensuremath{\left(\nexists #2#1.#1#3\right)}}
\newcommand{\NExQNoStar}[3][\;]{\ensuremath{\nexists #2#1.#1#3}}
\newcommand{\UniqueQ}{\@ifstar\UniqueQStar\UniqueQNoStar}
\newcommand{\UniqueQStar}[3][\;]{\ensuremath{\left(\exists! #2#1.#1#3\right)}}
\newcommand{\UniqueQNoStar}[3][\;]{\ensuremath{\exists! #2#1.#1#3}}
\newcommand{\Set}[2][]{\List[#1]{#2}{\{}{\}}}
\newcommand{\VSet}[2][]{\let\LISTOP\val\List[#1]{#2}{\{}{\}}}
\newcommand{\Tuple}[2][]{\List[#1]{#2}{(}{)}}
\newcommand{\VTuple}[2][]{\let\LISTOP\val\List[#1]{#2}{(}{)}}
\newcommand{\SetComp}[3][]{\{#1#2#1\mid#1#3#1\}}
\newcommand{\SetCompX}[3][]{\left\{#1#2#1\middle\vert#1#3#1\right\}}
\newcommand{\POWERSET}{\@ifstar\POWERSETStar\POWERSETNoStar}
\newcommand{\POWERSETStar}[1]{\ensuremath{\ON{2}^{\ifthenelse{\isempty{#1}}{\cdot}{#1}}}}
\newcommand{\POWERSETNoStar}[1]{\ensuremath{\ON{2}^{\ifthenelse{\isempty{#1}}{\cdot}{#1}}}}
\newcommand{\FINPOWERSET}{\@ifstar\FINPOWERSETStar\FINPOWERSETNoStar}
\newcommand{\FINPOWERSETStar}[1]{\ensuremath{\mathcal{P}_{\ON{fin}}(\ifthenelse{\isempty{#1}}{\cdot}{#1})}}
\newcommand{\FINPOWERSETNoStar}[1]{\ensuremath{\mathcal{P}_{\ON{fin}}\left(\ifthenelse{\isempty{#1}}{\cdot}{#1}\right)}}
\newcommand{\UNION}{\@ifstar\UNIONStar\UNIONNoStar}
\newcommand{\UNIONStar}[2]{\ensuremath{\left(\UNIONNoStar{#1}{#2}\right)}}
\newcommand{\UNIONNoStar}[2]{\ensuremath{\ifthenelse{\isempty{#1}}{\cdot}{#1}\cup\ifthenelse{\isempty{#2}}{\cdot}{#2}}}
\newcommand{\UNIOND}{\@ifstar\UNIONDStar\UNIONDNoStar}
\newcommand{\UNIONDStar}[2]{\ensuremath{\left(\UNIONDNoStar{#1}{#2}\right)}}
\newcommand{\UNIONDNoStar}[2]{\ensuremath{\ifthenelse{\isempty{#1}}{\cdot}{#1}\uplus\ifthenelse{\isempty{#2}}{\cdot}{#2}}}
\newcommand{\SETMINUS}{\@ifstar\SETMINUSStar\SETMINUSNoStar}
\newcommand{\SETMINUSStar}[2]{\ensuremath{\left(\SETMINUSNoStar{#1}{#2}\right)}}
\newcommand{\SETMINUSNoStar}[2]{\ensuremath{\ifthenelse{\isempty{#1}}{\cdot}{#1}\setminus\ifthenelse{\isempty{#2}}{\cdot}{#2}}}
\newcommand{\INTERSECT}{\@ifstar\INTERSECTStar\INTERSECTNoStar}
\newcommand{\INTERSECTStar}[2]{\ensuremath{\left(\INTERSECTNoStar{#1}{#2}\right)}}
\newcommand{\INTERSECTNoStar}[2]{\ensuremath{\ifthenelse{\isempty{#1}}{\cdot}{#1}\cap\ifthenelse{\isempty{#2}}{\cdot}{#2}}}
\newcommand{\CARTPROD}{\@ifstar\CARTPRODStar\CARTPRODNoStar}
\newcommand{\CARTPRODStar}[2]{\ensuremath{\left(\CARTPRODNoStar{#1}{#2}\right)}}
\newcommand{\CARTPRODNoStar}[2]{\ensuremath{\ifthenelse{\isempty{#1}}{\cdot}{#1}\times\ifthenelse{\isempty{#2}}{\cdot}{#2}}}
\newcommand{\FINCOUNT}{\@ifstar\FinCountStar\FinCountNoStar}
\newcommand{\FinCountStar}[1]{\ensuremath{\#(\ifthenelse{\isempty{#1}}{\cdot}{#1})}}
\newcommand{\FinCountNoStar}[1]{\ensuremath{\#\left(\ifthenelse{\isempty{#1}}{\cdot}{#1}\right)}}
\newcommand{\sconc}{\hspace{-0.6mm}\cdot\hspace{-0.6mm}}
\newcommand{\fun}{\ensuremath{\ON{\rightarrow}}}
\tikzstyle{istate}=[state,initial,initial text=]
\tikzstyle{fistate}=[state,accepting,initial,initial text=]
\tikzstyle{fistateA}=[state,accepting,initial,initial text=,initial where=above]
\tikzstyle{fistateB}=[state,accepting,initial,initial text=,initial where=below]
\tikzstyle{fistateL}=[state,accepting,initial,initial text=,initial where=left]
\tikzstyle{fistateR}=[state,accepting,initial,initial text=,initial where=right]
\tikzstyle{ifstate}=[state,accepting,initial,initial text=]
\tikzstyle{ifstateA}=[state,accepting,initial,initial text=,initial where=above]
\tikzstyle{ifstateB}=[state,accepting,initial,initial text=,initial where=below]
\tikzstyle{ifstateL}=[state,accepting,initial,initial text=,initial where=left]
\tikzstyle{ifstateR}=[state,accepting,initial,initial text=,initial where=right]
\tikzstyle{istateA}=[state,initial,initial text=,initial where=above]
\tikzstyle{istateB}=[state,initial,initial text=,initial where=below]
\tikzstyle{istateL}=[state,initial,initial text=,initial where=left]
\tikzstyle{istateR}=[state,initial,initial text=,initial where=right]
\tikzstyle{fstate}=[state,accepting]
\tikzstyle{SFSautomat}=[->,>=stealth',shorten >=1pt,auto,node distance=2cm,on grid,semithick,inner sep=1pt,bend angle=45]
\newcommand{\SFSAutomatEdge}[5]{\draw[->, thick](#1) edge[#4] node[#5] {\ensuremath{#2}} (#3);}
\newenvironment{propConjA}{\left(\def\unionAtest{1}\begin{array}{@{\if\unionAtest1\gdef\unionAtest{0}\phantom{\wedge}\else\wedge\fi}l@{}}}{\end{array}\right)}
  \newlength{\SFS@HEIGHT}
  \newlength{\SFS@WIDTH}
  \newcommand{\SplitX}[2]{
	    \settoheight{\SFS@HEIGHT}{$#2$}
	    \settowidth{\SFS@WIDTH}{$#2$}
	    \mbox{\begin{tikzpicture}[baseline=(current bounding box.center)]
	    \node[] (E) at (0,0) {$#1$};
	    \node[inner sep=0pt] (F) at ($(E.south west)+(1ex,-1ex)+(3ex+.5\SFS@WIDTH,-\SFS@HEIGHT)$) {$#2$};
	    \node[] (E) at (0,0) {\phantom{$#1$}};
	    \draw[fill] ($(E.east)+(1ex,0ex)$) circle (.2ex);
	    \draw[-] ($(E.east)+(1ex,0ex)$) -- ($(E.south east)+(1ex,-0.5ex)$) -- ($(E.south west)+(1ex,-0.5ex)$) -- ($(E.south west)+(1ex,-1ex)-(0,\SFS@HEIGHT)$) -- ($(F.west)+(-1ex,0)$);
	    \draw[fill] ($(F.west)+(-1ex,0)$) circle (.2ex);
	    \end{tikzpicture}}}
   \newcommand{\SplitS}[2]{
	    \settoheight{\SFS@HEIGHT}{$#2$}
	    \settowidth{\SFS@WIDTH}{$#2$}
	    \mbox{\begin{tikzpicture}[baseline=(current bounding box.center)]
	    \node[] (E) at (0,0) {$#1$};
	    \node[inner sep=0pt] (F) at ($(E.south west)+(1ex,0.5ex)+(3ex+.5\SFS@WIDTH,-\SFS@HEIGHT)$) {$#2$};
	    \end{tikzpicture}}}
\providecommand{\length}[1]{\lvert#1\rvert}
\providecommand{\lengthw}[1]{\lvert#1\rvert_{\hspace{-0.2mm}_L}}
\newcommand{\trivialN}[1]{\text{trivial}\xspace}
\newcommand{\Nbn}{\ensuremath{\mathbb{N}_{0}}}
\newcommand{\twoup}[1]{\ensuremath{2^{#1}}} 
\newcommand{\Psys}{\ensuremath{P}} 
\newcommand{\Qsys}{\ensuremath{Q}} 
\newcommand{\X}{\ensuremath{X}} 
\newcommand{\Xt}[2]{\ensuremath{\ifthenelse{\isempty{#2}}{X_{\T,#1}}{X_{#2,{\T_{#2}},#1}}}} 
\newcommand{\XT}[1]{\ensuremath{\ifthenelse{\isempty{#1}}{X_\T}{X_{#1,{\T_{#1}}}}}} 
\newcommand{\Xk}[2]{\ensuremath{\ifthenelse{\isempty{#2}}{X_{\TE,#1}}{X_{#2,\TE,#1}}}} 
\newcommand{\XK}[1]{\ensuremath{\ifthenelse{\isempty{#1}}{X_{\TE}}{X_{#1,\TE}}}} 
\newcommand{\Zk}[2]{\ensuremath{\ifthenelse{\isempty{#2}}{Z_{\TE,#1}}{Z_{#2,\TE,#1}}}} 
\newcommand{\x}{\ensuremath{x}} 
\newcommand{\Xo}[1]{\ensuremath{X_{#1 0}}}
\newcommand{\Zt}[1]{\ensuremath{\ifthenelse{\isempty{#1}}{Z_t}{Z_{#1,t}}}} 
\newcommand{\ZT}[1]{\ensuremath{\ifthenelse{\isempty{#1}}{Z_T}{Z_{#1,T}}}} 
\newcommand{\ZPit}[1]{\ensuremath{\ifthenelse{\isempty{#1}}{Z_t}{Z_{#1,t}}}} 
\newcommand{\ZPiT}[1]{\ensuremath{\ifthenelse{\isempty{#1}}{Z_T}{Z_{#1,T}}}} 
\newcommand{\ZTit}[1]{\ensuremath{\ifthenelse{\isempty{#1}}{\breve{Z}_t}{\breve{Z}_{#1,t}}}} 
\newcommand{\ZTiT}[1]{\ensuremath{\ifthenelse{\isempty{#1}}{\breve{Z}_T}{\breve{Z}_{#1,T}}}}
\newcommand{\w}{\ensuremath{w}}
\newcommand{\tr}{\ensuremath{\delta}}
\newcommand{\trt}{\ensuremath{\psi}}
\newcommand{\T}{\ensuremath{T}} 
\newcommand{\I}{\ensuremath{\mathcal{I}}}
\renewcommand{\ll}[1]{\ensuremath{|_{[#1]}}} 
\newcommand{\lb}[1]{\ensuremath{|_{\langle#1\rangle}}}
\newcommand{\f}[1]{\ensuremath{f\ifthenelse{\isempty{#1}}{}{\Tuple{#1}}}} 
\newcommand{\g}[1]{\ensuremath{g\ifthenelse{\isempty{#1}}{}{\Tuple{#1}}}} 
\newcommand{\h}[1]{\ensuremath{h\ifthenelse{\isempty{#1}}{}{\Tuple{#1}}}} 
\newcommand{\fs}[2]{\ensuremath{f_{#2}\ifthenelse{\isempty{#1}}{}{\Tuple{#1}}}} 
\newcommand{\gs}[2]{\ensuremath{g_{#2}\ifthenelse{\isempty{#1}}{}{\Tuple{#1}}}}  
\newcommand{\hs}[2]{\ensuremath{h_{#2}\ifthenelse{\isempty{#1}}{}{\Tuple{#1}}}}  
\newcommand{\lnc}{\ensuremath{l_t}}
\newcommand{\Beh}{\ensuremath{\mathcal{B}}}
 \newcommand{\Ds}[1]{\ensuremath{\mathcal{D}_{#1}}}
\newcommand{\ESn}[1]{\ensuremath{\ifthenelse{\isempty{#1}}{\Sigma^+_{S}}{\Sigma^+_{S,#1}}}}
\newcommand{\BehS}[1]{\ensuremath{\ifthenelse{\isempty{#1}}{\Beh_{S}}{\Beh_{S,#1}}}}
\newcommand{\BehE}[1]{\ensuremath{\ifthenelse{\isempty{#1}}{\Beh_{E}}{\Beh_{E,#1}}}}
\newcommand{\WT}{\ensuremath{W}}
\newcommand{\W}{\ensuremath{W}}
\newcommand{\D}{\ensuremath{D}}
\newcommand{\V}{\ensuremath{V}}
\newcommand{\statemap}[3]{\ifthenelse{\isempty{#2#3}}{\psi_{#1}}{\psi_{#1}(#2,#3)}}
\newcommand{\statemapPi}[3]{\ifthenelse{\isempty{#2#3}}{\varphi_{#1}}{\varphi_{#1}(#2,#3)}}
\newcommand{\statemapTi}[3]{\ifthenelse{\isempty{#2#3}}{\psi_{#1}}{\psi_{#1}(#2,#3)}}
\newcommand{\CONCAT}[4]{#1\wedge^{#2}_{#3}#4}
\newcommand{\Xx}[2]{\ensuremath{\chi_{#1}\ifthenelse{\isempty{#2}}{}{(#2)}}}
\newcommand{\Xxp}[2]{\ensuremath{\overline{\chi}_{#1}\ifthenelse{\isempty{#2}}{}{(#2)}}}
\newcommand{\Xxr}[3]{\ensuremath{\chi_{#1}^{#2}\ifthenelse{\isempty{#3}}{}{(#3)}}}
\newcommand{\Xxrp}[3]{\ensuremath{\overline{\chi}_{#1}^{#2}\ifthenelse{\isempty{#3}}{}{(#3)}}}
\newcommand{\signalmap}{\phi}
\newcommand{\TE}{\ensuremath{{T_E}}}
\newcommand{\E}{\ensuremath{\Sigma}}
\newcommand{\Ep}{\ensuremath{\Sigma^{\signalmap}}}
\newcommand{\ES}[1]{\ensuremath{\ifthenelse{\isempty{#1}}{\Sigma_{S}}{\Sigma_{S,#1}}}}
\newcommand{\EpS}[1]{\ensuremath{\ifthenelse{\isempty{#1}}{\Ep_{S}}{\Sigma^{\signalmap_{#1}}_{S,#1}}}}
\newcommand{\EE}[1]{\ensuremath{\ifthenelse{\isempty{#1}}{\Sigma_{E}}{\Sigma_{E,#1}}}}
\newcommand{\ESm}[1]{\ensuremath{\ifthenelse{\isempty{#1}}{\Sigma_{\psi}}{\Sigma_{\psi,#1}}}}
\newcommand{\El}{\ensuremath{\Sigma^{l}}}
\newcommand{\ElMax}{\ensuremath{\Sigma^{l^\Uparrow}}}
\newcommand{\ElaMax}{\ensuremath{\Sigma^{l^\uparrow}}}
\newcommand{\EplMaxS}[1]{\ensuremath{\ifthenelse{\isempty{#1}}{\Sigma^{\phi,l^\uparrow}_{S}}{\Sigma^{\phi_{#1},l^\uparrow}_{S,#1}}}}
\newcommand{\EplMaxSp}[1]{\ensuremath{\ifthenelse{\isempty{#1}}{\overline{\Sigma}^{\phi,l^\uparrow}_{S}}{\overline{\Sigma}^{\phi_{#1},l^\uparrow}_{S,#1}}}}
\newcommand{\EplncMaxS}[1]{\ensuremath{\ifthenelse{\isempty{#1}}{\Sigma^{\phi,\lnc^\uparrow}_{S}}{\Sigma^{\phi_{#1},\lnc^\uparrow}_{S,#1}}}}
\newcommand{\EplsMaxS}[2]{\ensuremath{\ifthenelse{\isempty{#1}}{\Sigma^{\phi,{#2}^\uparrow}_{S}}{\Sigma^{\phi_{#1},{#2}^\uparrow}_{S,#1}}}}
\newcommand{\ElMaxS}[1]{\ensuremath{\ifthenelse{\isempty{#1}}{\Sigma^{l^\Uparrow}_{S}}{\Sigma^{l^\Uparrow}_{S,#1}}}}
\newcommand{\ElaMaxS}[1]{\ensuremath{\ifthenelse{\isempty{#1}}{\Sigma^{l^\uparrow}_{S}}{\Sigma^{l^\uparrow}_{S,#1}}}}
\newcommand{\ElMaxSp}[1]{\ensuremath{\ifthenelse{\isempty{#1}}{\overline{\Sigma}^{l^\uparrow}_{S}}{\overline{\Sigma}^{l^\uparrow}_{S,#1}}}}
\newcommand{\ElncMaxS}[1]{\ensuremath{\ifthenelse{\isempty{#1}}{\Sigma^{\lnc^\uparrow}_{S}}{\Sigma^{\lnc^\uparrow}_{S,#1}}}}
\newcommand{\ElsMaxS}[2]{\ensuremath{\ifthenelse{\isempty{#1}}{\Sigma^{{#2}^\uparrow}_{S}}{\Sigma^{{#2}^\uparrow}_{S,#1}}}}
\newcommand{\ElE}[1]{\ensuremath{\ifthenelse{\isempty{#1}}{\Sigma^{l}_{E}}{\Sigma^{l}_{E,#1}}}}
\newcommand{\ElEp}[1]{\ensuremath{\ifthenelse{\isempty{#1}}{\Sigma^{l}_{E}}{\overline{\Sigma}^{l}_{E,#1}}}}
\newcommand{\ElncE}[1]{\ensuremath{\ifthenelse{\isempty{#1}}{\Sigma^{\lnc}_{E}}{\Sigma^{\lnc}_{E,#1}}}}
\newcommand{\ElMaxE}[1]{\ensuremath{\ifthenelse{\isempty{#1}}{\Sigma^{l^\uparrow}_{E}}{\Sigma^{l^\uparrow}_{E,#1}}}}
\newcommand{\ElMaxEp}[1]{\ensuremath{\ifthenelse{\isempty{#1}}{\overline{\Sigma}^{l^\uparrow}_{E}}{\overline{\Sigma}^{l^\uparrow}_{E,#1}}}}
\newcommand{\ElncMaxE}[1]{\ensuremath{\ifthenelse{\isempty{#1}}{\Sigma^{\lnc^\uparrow}_{E}}{\Sigma^{\lnc^\uparrow}_{E,#1}}}}
\newcommand{\ElsMaxE}[2]{\ensuremath{\ifthenelse{\isempty{#1}}{\Sigma^{{#2}^\uparrow}_{E}}{\Sigma^{{#2}^\uparrow}_{E,#1}}}}
\newcommand{\Behl}{\ensuremath{\Beh^{l}}}
\newcommand{\BehlMax}{\ensuremath{\Beh^{l^\Uparrow}}}
\newcommand{\BehlaMax}{\ensuremath{\Beh^{l^\uparrow}}}
\newcommand{\BehlMaxS}[1]{\ensuremath{\ifthenelse{\isempty{#1}}{\Beh^{l^\Uparrow}_{S}}{\Beh^{l^\Uparrow}_{S,#1}}}}
\newcommand{\BehlaMaxS}[1]{\ensuremath{\ifthenelse{\isempty{#1}}{\Beh^{l^\uparrow}_{S}}{\Beh^{l^\uparrow}_{S,#1}}}}
\newcommand{\BehlMaxSp}[1]{\ensuremath{\ifthenelse{\isempty{#1}}{\overline{\Beh}^{l^\uparrow}_{S}}{\overline{\Beh}^{l^\uparrow}_{S,#1}}}}
\newcommand{\BehlncMaxS}[1]{\ensuremath{\ifthenelse{\isempty{#1}}{\Beh^{\lnc^\uparrow}_{S}}{\Beh^{\lnc^\uparrow}_{S,#1}}}}
\newcommand{\BehlsMaxS}[2]{\ensuremath{\ifthenelse{\isempty{#1}}{\Beh^{{#2}^\uparrow}_{S}}{\Beh^{{#2}^\uparrow}_{S,#1}}}}
\newcommand{\BehlE}[1]{\ensuremath{\ifthenelse{\isempty{#1}}{\Beh^{l}_{E}}{\Beh^{l}_{E,#1}}}}
\newcommand{\BehlncE}[1]{\ensuremath{\ifthenelse{\isempty{#1}}{\Beh^{\lnc}_{E}}{\Beh^{\lnc}_{E,#1}}}}
\newcommand{\BehlMaxE}[1]{\ensuremath{\ifthenelse{\isempty{#1}}{\Beh^{l^\uparrow}_{E}}{\Beh^{l^\uparrow}_{E,#1}}}}
\newcommand{\BehlMaxEp}[1]{\ensuremath{\ifthenelse{\isempty{#1}}{\overline{\Beh}^{l^\uparrow}_{E}}{\overline{\Beh}^{l^\uparrow}_{E,#1}}}}
\newcommand{\BehlncMaxE}[1]{\ensuremath{\ifthenelse{\isempty{#1}}{\Beh^{\lnc^\uparrow}_{E}}{\Beh^{\lnc^\uparrow}_{E,#1}}}}
\newcommand{\BehlsMaxE}[2]{\ensuremath{\ifthenelse{\isempty{#1}}{\Beh^{{#2}^\uparrow}_{E}}{\Beh^{{#2}^\uparrow}_{E,#1}}}}
\newcommand{\BehVlMaxS}[1]{\ensuremath{\ifthenelse{\isempty{#1}}{\projState{\V}{\Beh^{l^\uparrow}_{S}}}{\projState{\V}{\Beh^{l^\uparrow}_{S,#1}}}}}
\newcommand{\BehDlMaxS}[1]{\ensuremath{\ifthenelse{\isempty{#1}}{\projState{\D}{\Beh^{l^\uparrow}_{S}}}{\projState{\D}{\Beh^{l^\uparrow}_{S,#1}}}}}
\newcommand{\EoMaxS}[1]{\ensuremath{\ifthenelse{\isempty{#1}}{\Sigma^{1^\uparrow}_{S}}{\Sigma^{1^\uparrow}_{S,#1}}}}
\newcommand{\projState}[2]{\pi_{#1}(#2)}
\begin{document}

\begin{frontmatter}

\title{Asynchronous $l$-Complete Approximations}

\author[csg,fn1]{Anne-Kathrin Schmuck}
\ead{a.schmuck@control.tu-berlin.de}
\author[csg,mpi]{J\"org Raisch}
\ead{raisch@control.tu-berlin.de}
\fntext[fn1]{Corresponding author, phone: 0049-(0)30-314-24094}
\address[csg]{Control Systems Group, Technische Universität Berlin, Einsteinufer 17, 10587 Berlin, Germany}
\address[mpi]{Max Planck Institute
for Dynamics of Complex Technical Systems, Sandtorstraße 1,
39106 Magdeburg, Germany}

 \begin{abstract}
This paper extends the $l$-complete approximation method developed for time invariant systems to a larger system class, ensuring that the resulting approximation can be realized by a finite state machine. 
To derive the new abstraction method, called asynchronous $l$-complete approximation, an asynchronous version of the well-known concepts of state property, memory span and $l$-completeness is introduced, extending the behavioral systems theory in a consistent way.
\end{abstract}

\begin{keyword}
finite state abstraction, $l$-complete approximations, state property, behavioral systems theory
\end{keyword}
\end{frontmatter}

\section{Introduction}

Real life control problems for large scale systems are very challenging due to numerous interactions between different components and usually tight performance requirements. One way to reduce the complexity of those control problems is to introduce different control layers using a well defined abstraction of the plant. Usually, the top control layer will enforce high level specifications, such as interconnection or safety requirements, typically expressible by regular languages.
With this specification type supervisory control theory (SCT) \cite{RamWon,RamWon1989} can be used to synthesize a correct by design control system if the abstracted plant model can be represented by a regular language as well.\\
Using this well known result, many abstraction techniques, e.g. \cite{MoorRaisch1999,AlurHenzingerLaffarrierePappas2000,MoorRaischYoung2002, Pappas2003,TabuadaPappas2003,Tabuada2008,TabuadaBook}, have been developed to generate a regular language representation of the plant model. The approach by Moor and Raisch \cite{MoorRaisch1999}, called $l$-complete approximation, is distinct in two ways:
\begin{inparaenum}[(i)]
 \item the accuracy of the abstracted system can be adjusted during construction without adjusting the external signal space and
 \item the behavioral framework \cite{Willems1989,Willems1991} is used to model the plant, allowing for infinite signals with eventuality properties.
\end{inparaenum}
If the external signal space is finite and these signals evolve along the non-negative, discrete time axis $\Nbn$, the plant behavior is a so called $\omega$-language. Even though SCT cannot be directly applied to $\omega$-languages, it was shown in \cite{MoorRaisch1999}, and recently generalized in \cite{MoorSchmidtWittmannIFAC2011}, that for $\omega$-languages realizable by finite state machines 
(FSM),
a variant of SCT can be used to synthesize a minimally restrictive controller for specifications representable by the closure of a regular language (for details, see \cite{MoorSchmidtWittmannIFAC2011} and the references therein).\\
In \cite{MoorRaisch1999} and subsequent papers, $l$-complete approximations were only defined for time invariant systems, i.e., systems that are invariant w.r.t.\ the backward time shift of signals. As a slight extension, \cite[p.51]{MoorPhd} also considers 
systems which are time invariant after a finite start-up phase.\\
As pointed out in \cite[p.44]{MoorPhd}, for systems with time axis $\Nbn$, the $l$-completeness property for time invariant systems used in \cite{MoorRaisch1999,MoorPhd} is slightly weaker than the original definition by J.C.Willems \cite{Willems1989}. 
This implies that the strongest $l$-complete approximation suggested in \cite{MoorRaisch1999} is also $l$-complete in the sense of \cite{Willems1989}, but not necessarily the \emph{strongest} $l$-complete approximation in the sense of \cite{Willems1989}.\\
To resolve this inconsistency and to consider a larger system class, we extend the construction of strongest $l$-com\-plete approximations to not necessarily time invariant systems, 
and ensure that these approximations can still be realized by FSMs.\\ 
As a first step, in Sections~\ref{sec:lcom} and \ref{sec:statespace}, we introduce a straightforward extension of the existing approximation method to not necessarily time invariant dynamical systems, ensuring $l$-com\-pleteness in the sense of \cite{Willems1989}. We show in \REFsec{sec:FSMRepresentations} that the constructed abstractions do generally not allow for an FSM realization since they require a time dependent next state relation. 
Intuitively, a system is realizable by an FSM if it allows for concatenation of state trajectories that reach  the same state asynchronously (i.e., at different times), as used in the context of state maps by Julius and van der Schaft \cite{JuliusSchaft2005,JuliusPhdThesis2005}. 
To emphasize that this property does not imply and is not implied by the time invariance property of behavioral systems, we call it \textit{asynchronous state property} and formalize it in \REFsec{sec:AsyncProp}.
Then we can introduce an \textit{asynchronous $l$-completeness} property, since the state and the $l$-completeness property are strongly related.
This leads to a new approximation technique introduced in \REFsec{sec:alcomplapprox}, which is referred to as \textit{asynchronous $l$-complete approximation} and which ensures that the resulting abstraction can be realized by an FSM.

\section{Preliminaries}\label{sec:prelim}

In the behavioral framework (e.g., \cite{Willems1991}) a \textit{dynamical system} is given by $\E=\Tuple{\T,\WT,\Beh}$, consisting of the time axis $\T$ (in this paper: $\T=\Nbn$), the signal space $\WT$ and the behavior of the system $\Beh\subseteq\WT^{\T}$, where 
$\WT^{\T}\deff\SetComp{w}{w:\T\fun\WT}$ is the set of all  
\textit{signals}
taking values in $\WT$. 
 Let $\I$ be a bounded interval on $\Nbn$, then $\WT^\I\deff\SetComp{w}{w:\I\fun\WT}$ is the set of \textit{signals on $\I$} taking values in $\WT$. Given time instants $t_1,t_2\in\Nbn,~t_1\leq t_2$, we say that the string $w\in\W^{[t_1,t_2]}$ is of length $\lengthw{\w}=t_2-t_1+1$. 
 Furthermore, $\w|_{\I}$ is the \textit{restriction} of the map $w:\Nbn\fun\WT$ to the domain $\I$.
 $\Beh|_{\I}\subseteq\WT^\I$ denotes the restriction of all signals in $\Beh$ to $\I$ and we define\footnote{Throughout this paper we use the notation "$\AllQ{\cdot}{\cdot}$", meaning that all statements after the dot hold for all variables in front of the dot. "$\ExQ{\cdot}{\cdot}$" is interpreted analogously. }
 $\AllQ{t_1,t_2\in\Nbn,~t_1< t_2}{\w\ll{t_2,t_1}=\lambda}$, where $\lambda$ denotes the \textit{empty string} with $\lengthw{\lambda}=0$. 
Now let $\WT=\WT_1\times\WT_2$ be a product space. Then the \textit{projection} of a signal $w\in\WT^{\Nbn}$ to $\WT_1$ is given by $\projState{\WT_1}{w}\deff\SetComp{w_1\in\WT_1^{\Nbn}}{\ExQ{w_2\in\WT_2^{\Nbn}}{w=\Tuple{w_1,w_2}}}$ and $\projState{\WT_1}{\Beh}$ denotes the projection of all signals in the behavior to $\WT_1$. 
Given two signals $w_1,w_2\in\WT^{\Nbn}$ and two time instants $t_1,t_2\in\Nbn$, the \textit{concatenation}  $w_3=\CONCAT{w_1}{t_1}{t_2}{w_2}$ is given by
\begin{equation*}
\AllQ{t\in\Nbn}{w_3(t)=\DiCases{w_1(t)}{t< t_1}{w_2(t-t_1+t_2)}{t\geq t_1}},
\end{equation*}
where we denote $\CONCAT{\cdot}{t}{t}{\cdot}$ by $\CONCAT{\cdot}{}{t}{\cdot}$. Furthermore, the concatenation of their restrictions ${w_1'=w_1\ll{0,t_1}}$ and  ${w_2'=w_2\ll{0,t_2}}$ is defined as 
$\w_1'\sconc\w_2':=\BR{\CONCAT{w_1}{t_1+1}{0}{w_2}}\ll{0,t_1+t_2+1}$. This corresponds to the standard concatenation of finite strings. Furthermore, for a finite string $w=\nu_0\nu_1\hdots\nu_l$ we denote the restriction of $w$ by $w\lb{i,j}:=\nu_i\hdots\nu_j$ with ${0\leq i\leq j\leq l}$.
Following \citep[Def. II.3]{Willems1991}, we define the \textit{backward shift operator} $\sigma^t$  \SUCHTHAT $\AllQ{t,k\in\Nbn}{(\sigma^t f)(k)\deff(f(k+t))}$
and say that $\E$ is \textit{time invariant} if
$\sigma\Beh\subseteq \Beh$.
We call $\E$ \textit{strictly time invariant} if $\sigma \Beh=\Beh$.

\section{$l$-completeness and $l$-complete approximation}\label{sec:lcom}
When reasoning about systems with infinite time axis one has to distinguish between local and eventuality properties. Local properties can be evaluated on a finite time interval whereas eventuality properties can only be evaluated after infinite time. Systems whose behavior can be fully described by local properties are called complete \citep[Def. II.4]{Willems1991}; formally, $\E$ is said to be \textbf{complete} if
\begin{equation}\label{equ:complete_old}
\propAequ{\AllQ*{t_1,t_2\in\Nbn,t_1\leq t_2}{w\ll{t_1,t_2}\in \Beh\ll{t_1,t_2}}}{w\in\Beh}.
\end{equation}
It is easy to show that \eqref{equ:complete_old} is equivalent to 
\begin{equation}
\propAequ{\AllQ*{\tau\in\Nbn}{w\ll{0,\tau}\in \Beh\ll{0,\tau}}}{w\in\Beh},
\end{equation}
which is also known as $\omega$-closedness \cite{MoorSchmidtWittmannIFAC2011}.\\
In the special case where the behavior can be fully described by local properties evaluated on time intervals of length $l+1$, $l\in\Nbn$, the system is called \textbf{$l$-complete} \citep[p.184]{Willems1989}, formally 
\begin{equation}\label{equ:def:lcompl}
\propAequ{\AllQ*{t\in\Nbn}{w\ll{t,t+l}\in \Beh\ll{t,t+l}}}{w\in\Beh}.
\end{equation}
To generate some intuition for the $l$-completeness property, 
we 
define $\Ds{l+1}=\bigcup_{t'\in\Nbn}\Beh\ll{t',t'+l}$ to be the set of all finite strings representing the restriction of admissible signals to a time interval of length $l+1$. Now 
consider the following gedankenexperiment: assume playing a sophisticated domino game where $\Ds{l+1}$ is the set of dominos. Pick the first domino from the set $\Beh\ll{0,l}$ and append one domino from the set $\Beh\ll{1,1+l}$ if the last $l$ symbols of the first domino are equivalent to the first $l$ symbols of the second domino. Playing the domino game arbitrarily long and with all possible initial conditions and domino combinations, we get the set $\Behl{}$ containing all signals that satisfy the left side of \eqref{equ:def:lcompl}. If the system is $l$-complete
we have $\Beh=\Behl{}$,  emphasizing that all valid signals can be fully described by a local property.

\begin{example}\label{exp:lcomplete}
\normalfont
 Consider the system
 \begin{align}
  &\E=\Tuple{\Nbn,\WT,\Beh} \quad\SUCHTHAT\label{equ:exp:lcomplete}\\
  &\Beh=\Set{aaab(aab)^\omega, aab(aab)^\omega, ab(aab)^\omega, b(aab)^\omega},\notag
 \end{align}
where $(\cdot)^\omega$ denotes the infinite repetition of the respective string. Observe that $\E$ is time invariant, but not strictly time invariant, since 
\begin{equation*}
aaab(aab)^\omega\notin\sigma\Beh=\Set{aab(aab)^\omega, ab(aab)^\omega, b(aab)^\omega, (aab)^\omega}. 
\end{equation*}
 Using $l=1$ we get the domino set
\begin{equation}\label{equ:exp:lcomplete:1}
 \AllQ{t\in\Nbn}{\Beh\ll{t,t+1}=\Beh\ll{0,1}=\Set{aa,ab,ba}}.
\end{equation}
As depicted in Figure~\ref{fig:domino_l1}, we can start the domino game with the piece $ba$ and append a piece that starts with an $a$, e.g., $aa$. Observe that the signal constructed in Figure~\ref{fig:domino_l1}, i.e., $w=baaab...$, is not allowed in \eqref{equ:exp:lcomplete} since not more than two sequential $a$'s can occur for $t>0$. However, we can of cause construct all signals $w\in\Beh$ using the outlined domino game. This implies that 
 \begin{inparaenum}[(i)]
  \item the system $\E$ in \eqref{equ:exp:lcomplete} is not $1$-complete and
  \item the domino game constructs a behavior $\Beh^1$ that is larger than the one in \eqref{equ:exp:lcomplete}, i.e., $\Beh^1\supset\Beh$.
  \end{inparaenum}
Now, increasing $l$ to $l=2$ gives the following set of domino pieces
 \begin{align}
  &\Beh\ll{0,2}=\Set{aaa,aab,aba,baa},\label{equ:exp:lcomplete:2}\\
  &\AllQ{t>0}{\Beh\ll{t,t+2}}=\Beh\ll{1,3}=\Set{aab,aba,baa}.\notag
 \end{align}
Playing the domino game with these sets results, for example, in the signal depicted in Figure~\ref{fig:domino_l2}, where always two symbols are required to match. Observe that after the first piece we are only allowed to pick from the set $\Beh\ll{1,3}$. This prevents the occurrence of more than two sequential $a$'s since the domino $aaa$ cannot be attached. We get $\Beh^2=\Beh$, i.e., the system $\E$ in \eqref{equ:exp:lcomplete} is $2$-complete.
\end{example}
\begin{figure}[htb]
\begin{minipage}{0.45\linewidth}
\begin{center}
 \begin{tikzpicture}
 \begin{pgfonlayer}{background}
   \path      (0,0) node (o) {
      \includegraphics[width=0.47\linewidth]{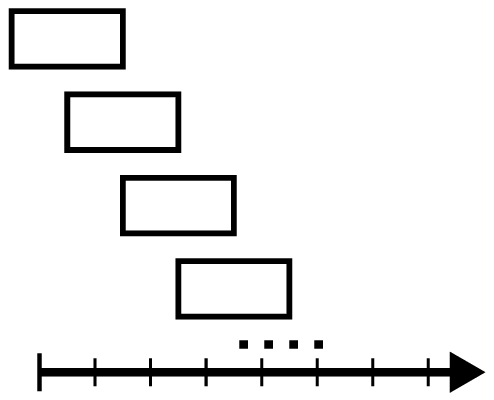}};
 \end{pgfonlayer}
 \begin{pgfonlayer}{foreground}
  \path (o.north west)+(0.5,-0.35) node (i) {$~b~\,a$};
  \path (i.north west)+(0.85,-0.85) node (ii){$~a~\,a$};
  \path (ii.north west)+(0.85,-0.8) node (iii){$~a~\,a$};
  \path (iii.north west)+(0.85,-0.8) node (iiii){$~a~\,b$};
  \path (o.south west)+(0.15,-0.05) node (t0) {$0$};
  \path (t0)+(0.9,0) node (t2) {$2$};
  \path (t2)+(0.85,0) node (t4) {$4$};
  \path (t4)+(0.9,0) node (t6) {$6$};
  \path (t6)+(0.9,0) node (t6) {$\Nbn$};
 \end{pgfonlayer}
 \end{tikzpicture}
 \end{center}
  \vspace{-0.4cm}
 \caption{Domino game for ${l=1}$ in Example~\ref{exp:lcomplete}.}\label{fig:domino_l1}
 \end{minipage}\hfill
 \begin{minipage}{0.45\linewidth}
\begin{center}
 \begin{tikzpicture}
 \begin{pgfonlayer}{background}
   \path      (0,0) node (o) {
      \includegraphics[width=0.47\linewidth]{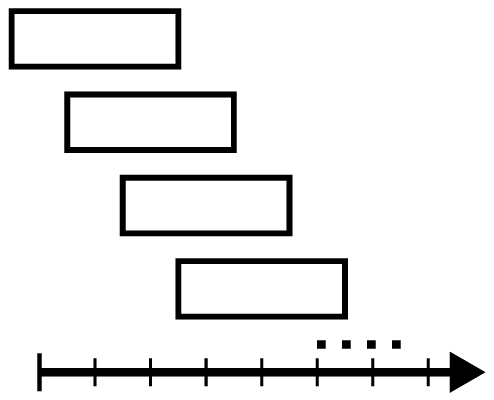}};
 \end{pgfonlayer}
\begin{pgfonlayer}{foreground}
  \path (o.north west)+(0.7,-0.35) node (i) {$~b~\,a~\,a$};
  \path (i.north west)+(1.0,-0.85) node (ii){$~a~\,a~\,b$};
  \path (ii.north west)+(1.0,-0.85) node (iii){$~a~\,b~\,a$};
  \path (iii.north west)+(1.1,-0.85) node (iiii){$b\,~a~\,a$};
   \path (o.south west)+(0.15,-0.05) node (t0) {$0$};
  \path (t0)+(0.9,0) node (t2) {$2$};
  \path (t2)+(0.85,0) node (t4) {$4$};
  \path (t4)+(0.9,0) node (t6) {$6$};
  \path (t6)+(0.9,0) node (t6) {$\Nbn$};
 \end{pgfonlayer}
 \end{tikzpicture}
 \end{center}
 \vspace{-0.4cm}
 \caption{Domino game for ${l=2}$ in Example~\ref{exp:lcomplete}.}\label{fig:domino_l2}
 \end{minipage}
 \end{figure}

As a special case it can be shown that the behavior of an $l$-complete system $\E$ can by fully described by the initial signal pieces $\Beh\ll{0,l}$ if $\E$ is strictly time invariant.

\begin{lemma}\label{lem:lcnessTI}
Let $\E=\Tuple{\Nbn,\WT,\Beh}$ be a strictly time invariant dynamical system and $l\in\Nbn$. Then $\E$ is $l$-complete \IFF 
\begin{equation}\label{equ:lem:lcnessTI:neu}
\propAequ{\AllQ*{t\in\Nbn}{w\ll{t,t+l}\in \Beh\ll{0,l}}}{w\in\Beh}.
\end{equation}
\end{lemma}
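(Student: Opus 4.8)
The plan is to reduce the claimed biconditional directly to the defining biconditional of $l$-completeness \eqref{equ:def:lcompl}, by showing that under strict time invariance the two left-hand predicates coincide. The single fact that does all the work is the auxiliary claim: if $\sigma\Beh=\Beh$, then $\Beh\ll{t,t+l}=\Beh\ll{0,l}$ for every $t\in\Nbn$, where the two sides are compared as sets of finite strings of length $l+1$ (i.e.\ forgetting the absolute position of the index interval, consistent with the length notation $\lengthw{w}=t_2-t_1+1$ of \REFsec{sec:prelim}). Granting this, the predicate $\AllQ{t\in\Nbn}{w\ll{t,t+l}\in\Beh\ll{0,l}}$ of \eqref{equ:lem:lcnessTI:neu} is \emph{literally} the predicate $\AllQ{t\in\Nbn}{w\ll{t,t+l}\in\Beh\ll{t,t+l}}$ of \eqref{equ:def:lcompl}, so the two biconditionals are the same statement and the lemma follows.

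First I would record that $\sigma\Beh=\Beh$ iterates to $\sigma^t\Beh=\Beh$ for all $t\in\Nbn$, by induction using $\sigma^{t+1}\Beh=\sigma(\sigma^t\Beh)=\sigma(\Beh)=\Beh$ together with $(\sigma^{t+1}f)(k)=f(k+t+1)$. Then I prove the auxiliary claim by two inclusions. For ``$\subseteq$'', take $v\in\Beh\ll{t,t+l}$ with a witness $w\in\Beh$ such that $v=w\ll{t,t+l}$; since $\sigma^t\Beh\subseteq\Beh$ we have $\sigma^t w\in\Beh$, and $(\sigma^t w)\ll{0,l}$ is the string $w(t)w(t+1)\dots w(t+l)=v$, so $v\in\Beh\ll{0,l}$. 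For ``$\supseteq$'', take $v\in\Beh\ll{0,l}$ with a witness $w\in\Beh$ such that $v=w\ll{0,l}$; since $\Beh\subseteq\sigma^t\Beh$ there is $w'\in\Beh$ with $w=\sigma^t w'$, whence $w'(t+k)=(\sigma^t w')(k)=w(k)$ for $0\leq k\leq l$, so that $w'\ll{t,t+l}=v$ and thus $v\in\Beh\ll{t,t+l}$.

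With the auxiliary claim in hand the conclusion is immediate: substituting $\Beh\ll{t,t+l}=\Beh\ll{0,l}$ into \eqref{equ:def:lcompl} turns it verbatim into \eqref{equ:lem:lcnessTI:neu}, so $\E$ satisfies \eqref{equ:def:lcompl} (i.e.\ is $l$-complete) if and only if it satisfies \eqref{equ:lem:lcnessTI:neu}.

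The main obstacle I anticipate is purely bookkeeping: being precise about the identification of $w\ll{t,t+l}$ with a length-$(l{+}1)$ string, so that membership in $\Beh\ll{0,l}$ is meaningful across differently-indexed intervals, and keeping track of which half of $\sigma\Beh=\Beh$ each inclusion consumes. The forward inclusion ``$\subseteq$'' uses only time invariance $\sigma^t\Beh\subseteq\Beh$, whereas the reverse inclusion ``$\supseteq$'' genuinely needs the \emph{strictness} $\Beh\subseteq\sigma^t\Beh$ to furnish the backward-shifted witness $w'$; this is exactly why \REFexa{exp:lcomplete}, which is time invariant but not strictly so, would fail to admit the simplified characterization \eqref{equ:lem:lcnessTI:neu}.
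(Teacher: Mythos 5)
Your proof is correct and takes essentially the same route as the paper's: both arguments reduce the lemma to the single fact that strict time invariance forces $\Beh\ll{t,t+l}=\Beh\ll{0,l}$ for all $t\in\Nbn$, after which \eqref{equ:lem:lcnessTI:neu} and \eqref{equ:def:lcompl} are literally the same biconditional. The paper asserts this equality in one line ($\sigma\Beh=\Beh$ implies $\sigma^t\Beh=\Beh$, hence the restrictions agree), whereas you additionally spell out the induction on $t$ and the two set inclusions — including the accurate observation that only the inclusion $\Beh\ll{0,l}\subseteq\Beh\ll{t,t+l}$ consumes strictness, which is exactly the content of Remark~\ref{rem:AnmerkungMoorRaisch}.
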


\begin{proof}
If $\E$ is strictly time invariant, i.e., $\sigma\Beh=\Beh$, then $\AllQ{t\in\Nbn}{\sigma^t \Beh=\Beh}$, hence
$\AllQ{t\in\Nbn}{\Beh\ll{t,t+l}=\Beh\ll{0,l}}$. Therefore, \eqref{equ:lem:lcnessTI:neu} and \eqref{equ:def:lcompl} are equivalent.
\end{proof}

\begin{remark}\label{rem:AnmerkungMoorRaisch}
For time invariant systems that are not strictly time invariant observe that
$\ExQ{t\in\Nbn}{\Beh\ll{t,t+l}\subset \Beh\ll{0,l}}$, implying that in this case \eqref{equ:lem:lcnessTI:neu} and \eqref{equ:def:lcompl} are not equivalent.
More specifically, the class of time invariant systems satisfying \eqref{equ:lem:lcnessTI:neu} is larger than the class of systems satisfying \eqref{equ:def:lcompl}.
Therefore, the definition of $l$-completeness via \eqref{equ:lem:lcnessTI:neu}, as used in \citep[Def.8]{MoorRaisch1999} and subsequent papers, does formally only coincide with the original definition by J.C.Willems \citep[Sec.1.4.1]{Willems1989} for strictly time invariant systems.
\end{remark}

The set $\Behl{}$ generated in the outlined domino game also matches the behavior of the system $\E$, if $\E$ is $r$-complete with $r\leq l$, since using larger dominos cannot lead to a richer behavior. Furthermore, as already shown in Example~\ref{exp:lcomplete}, we will always get $\Behl{}\supseteq\Beh$ even if the system is not complete at all, since using less information in the domino game generates more freedom in constructing signals.
Formalizing this idea, following
\citep[Def.9]{MoorRaisch1999} we say that $\El{}=\Tuple{\Nbn,\W,\Behl{}}$ is an \textbf{$l$-complete approximation} of $\E=\Tuple{\Nbn,\WT,\Beh}$, if
\begin{inparaenum}[(i)]
 \item $\El{}$ is $l$-complete and 
 \item $\Behl{}\supseteq\Beh{}$.
\end{inparaenum}
Furthermore, $\ElMax{}=\Tuple{\Nbn,\W,\BehlMax{}}$ is the \textbf{strongest $l$-complete approximation} of $\E{}
$, if 
\begin{inparaenum}[(i)]
 \item $\ElMax{}$ is an $l$-complete approximation of $\E{}$ and
 \item for any $l$-complete approximation $\E{}'=\Tuple{\Nbn,\W,\Beh{}'}$ of $\E{}$ it holds that $\BehlMax{}\subseteq\Beh{}'$.
\end{inparaenum}

\begin{remark}
 Note that (strongest) $l$-complete approximations as defined above only coincide with (strongest) $l$-complete approximations introduced in \cite{MoorRaisch1999}, if the underlying system is strongly time invariant. This is an immediate consequence of Remark~\ref{rem:AnmerkungMoorRaisch}.
\end{remark}

Generalizing the results in \citep[Prop.10]{MoorRaisch1999} to the $l$-completeness definition in \eqref{equ:def:lcompl} shows that the behavior $\Behl{}$ constructed in the outlined domino game is the behavior of the strongest $l$-complete approximation, $\BehlMax{}$.

\begin{lemma}\label{lem:constructElMax_general}
 Let $\E{}=\Tuple{\Nbn,\W,\Beh}$ be a dynamical system. Then the unique strongest $l$-complete approximation of $\E{}$ is given by  $\ElMax{}=\Tuple{\Nbn,\W,\BehlMax{}}$, with
 \begin{equation}\label{equ:lem:constructElMax_general:2}
 \BehlMax{}:=\SetCompX{\w\in\W^{\Nbn}}{\AllQ{t\in \Nbn}{\w\ll{t,t+l}\in \Beh\ll{t,t+l}}}.
 \end{equation}
 Furthermore, if $\E$ is strictly time invariant then 
  \begin{equation}\label{equ:lem:constructElMax_general:3}
 \BehlMax{}=\SetCompX{\w\in\W^{\Nbn}}{\AllQ{t\in \Nbn}{\w\ll{t,t+l}\in \Beh\ll{0,l}}}.
 \end{equation}
\end{lemma}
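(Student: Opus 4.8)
The plan is to verify directly that the behavior $\BehlMax$ defined in \eqref{equ:lem:constructElMax_general:2} satisfies the two defining conditions of a strongest $l$-complete approximation, then to deduce uniqueness by antisymmetry, and finally to specialise to the strictly time invariant case.

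First I would check condition (ii) of an $l$-complete approximation, i.e.\ $\Beh\subseteq\BehlMax$. This is immediate: if $\w\in\Beh$ then $\w\ll{t,t+l}\in\Beh\ll{t,t+l}$ for every $t\in\Nbn$ by the very definition of the restricted behavior, so $\w$ meets the membership condition in \eqref{equ:lem:constructElMax_general:2}. As a by-product, monotonicity of restriction gives $\Beh\ll{t,t+l}\subseteq\BehlMax\ll{t,t+l}$ for all $t$.

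The central step is to show that $\ElMax$ is $l$-complete, i.e.\ that $\BehlMax$ satisfies \eqref{equ:def:lcompl}; I expect this to be the main point of the proof. It rests on the identity $\BehlMax\ll{t,t+l}=\Beh\ll{t,t+l}$ for all $t\in\Nbn$. The inclusion $\BehlMax\ll{t,t+l}\supseteq\Beh\ll{t,t+l}$ was noted above, and the reverse inclusion holds because every length-$(l+1)$ restriction of a signal in $\BehlMax$ lies in $\Beh\ll{t,t+l}$ directly by \eqref{equ:lem:constructElMax_general:2}. Granting this identity, the $l$-completeness requirement for $\ElMax$, namely that $\w\in\BehlMax$ if and only if $\AllQ{t\in\Nbn}{\w\ll{t,t+l}\in\BehlMax\ll{t,t+l}}$, reduces to the statement that $\w\in\BehlMax$ if and only if $\AllQ{t\in\Nbn}{\w\ll{t,t+l}\in\Beh\ll{t,t+l}}$, whose right-hand condition is verbatim the definition of $\BehlMax$; hence the equivalence holds trivially.

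For minimality I would fix an arbitrary $l$-complete approximation $\E'=\Tuple{\Nbn,\W,\Beh'}$ and show $\BehlMax\subseteq\Beh'$, using both of its properties. From $\Beh\subseteq\Beh'$ we obtain $\Beh\ll{t,t+l}\subseteq\Beh'\ll{t,t+l}$, hence any $\w\in\BehlMax$ satisfies $\w\ll{t,t+l}\in\Beh'\ll{t,t+l}$ for every $t$; since $\E'$ is $l$-complete, this forces $\w\in\Beh'$. Uniqueness is then the standard antisymmetry argument: two strongest approximations each contain the other, so their behaviors coincide. Finally, for the strictly time invariant case I would reuse the computation from the proof of Lemma~\ref{lem:lcnessTI}, where $\sigma\Beh=\Beh$ yields $\sigma^t\Beh=\Beh$ and therefore $\Beh\ll{t,t+l}=\Beh\ll{0,l}$ for all $t\in\Nbn$; substituting this equality into \eqref{equ:lem:constructElMax_general:2} immediately gives \eqref{equ:lem:constructElMax_general:3}.
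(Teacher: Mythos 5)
Your proposal is correct and follows essentially the same route as the paper's own proof: establish $\Beh\subseteq\BehlMax{}$, use the identity $\BehlMax{}\ll{t,t+l}=\Beh\ll{t,t+l}$ to obtain $l$-completeness, prove minimality against an arbitrary $l$-complete approximation via $\Beh\ll{t,t+l}\subseteq\Beh'\ll{t,t+l}$, conclude uniqueness from minimality, and reduce the strictly time invariant case to the identity $\Beh\ll{t,t+l}=\Beh\ll{0,l}$ from Lemma~\ref{lem:lcnessTI}. The only difference is presentational: you spell out both inclusions behind the key identity and the antisymmetry step, which the paper leaves more implicit.
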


\begin{proof}
\begin{inparaenum}[(i)]
 \item  $\ElMax{}$ is $l$-complete as \eqref{equ:lem:constructElMax_general:2} implies $\BehlMax{}\ll{t,t+l}=\Beh\ll{t,t+l}$, hence
  $\propAequ{w\in\BehlMax{}}{\AllQ*{t\in\Nbn}{w\ll{t,t+l}\in \BehlMax{}\ll{t,t+l}}}$.\\
 \item ${\Beh\subseteq\BehlMax{}}$ holds, as ${w\in\Beh{}}$ implies $\AllQ{t\in\Nbn}{w\ll{t,t+l}\in \Beh\ll{t,t+l}}$, hence ${w\in\BehlMax{}}$ from \eqref{equ:lem:constructElMax_general:2}.\\
 \item For any $l$-complete approximation $\E{}'=\Tuple{\Nbn,\W,\Beh{}'}$ of $\E$ the inclusion $\Beh{}\subseteq\Beh{}'$ and therefore $\Beh{}\ll{t,t+l}\subseteq\Beh{}'\ll{t,t+l}$ holds. 
 Hence, using \eqref{equ:lem:constructElMax_general:2}, $w\in\BehlMax{}$ implies $\AllQ{t\in\Nbn}{w\ll{t,t+l}\in \Beh'\ll{t,t+l}}$ and therefore $w\in\Beh'{}$ since $\E{}'$ is $l$-complete. \\
\end{inparaenum}
Now (i)-(iii) imply that $\ElMax{}$ is a strongest $l$-complete approximation.
Finally, $\ElMax{}$ is unique as (iii) implies that $\BehlMax{}$ is the unique smallest element of the set $\Set{\Beh'}$ containing the behaviors of all $l$-complete approximations $\E{}'=\Tuple{\Nbn,\W,\Beh{}'}$ of $\E$.\\
The second part of the lemma follows directly from \eqref{equ:lem:lcnessTI:neu} in \REFlem{lem:lcnessTI}.
\end{proof}

\begin{example}\label{exp:Beh1strictlyTI}
\normalfont
 As a consequence of $\REFlem{lem:constructElMax_general}$, the behaviors $\Beh^1$ and $\Beh^2$ constructed in Example~\ref{exp:lcomplete} characterize the strongest $1$-complete and the strongest $2$-complete approximation of the system in \eqref{equ:exp:lcomplete}, respectively. 
\end{example}

\section{State Space Systems}\label{sec:statespace}
To represent a behavior, internal variables can be useful. 
Following \citep[Def.1.2]{Willems1989}, a \textbf{dynamical system with internal signal space} $X$ is defined by $\ES{}=\Tuple{\Nbn,\WT,\X,\BehS{}}$ with $\BehS{}\subseteq(\W\times\X)^{\Nbn}$. The internal variables are called states, if the axiom of state holds, i.e., all relevant information from the past and present necessary to decide on the possible future evolution of the system is captured by the current value of the internal variable. Formally, a system $\ES{}=\Tuple{\Nbn,\WT,\X,\BehS{}}$ is a \textbf{state space dynamical system} \citep[p.185]{Willems1989}, if
\begin{equation}\label{equ:StateSpaceDynamicalSystem:1}
 \AllQ{\Tuple{w_1,x_1},\Tuple{w_2,x_2}\in\BehS{}, t\in\Nbn}{\propImp*{x_1(t)=x_2(t)}{\CONCAT{\Tuple{w_1,x_1}}{}{t}{\Tuple{w_2,x_2}}\in\BehS{}}},
\end{equation}
 and $\ES{}$ is a \textbf{state space representation} of $\E=\Tuple{\Nbn,\WT,\Beh}$ if ${\projState{\WT}{\BehS{}}=\Beh}$.
Recalling the gedankenexperiment in Section~\ref{sec:lcom}, all necessary information to determine the future evolution (i.e., the next feasible domino) is captured in the last $l$ symbols.
Systems which exhibit this property are said to have \textbf{memory span $l$} \citep[p.184]{Willems1989}, formally
\begin{equation}\label{equ:def:memoryspan}
 \AllQ{w_1,w_2\in\Beh,t\in\Nbn}{\propImp*{w_1\ll{t,t+l-1}=w_2\ll{t,t+l-1}}{\CONCAT{w_1}{}{t}{w_2}\in\Beh}.}
\end{equation}

From \eqref{equ:def:memoryspan} we can conclude that 
\begin{inparaenum}[(i)]
\item every $l$-com\-plete system has memory span $l$,
 \item the state property implies that $\E_x=(\Nbn,\allowbreak	\X,\allowbreak\projState{\X}{\BehS{}})$ has memory span one and 
 \item a straightforward choice for the state space of an $l$-complete system is given by the set of admissible strings\footnote{In contrast to \citep[p.6]{MoorRaisch1999} this choice of the state space represents only the reachable part of $W^{[0,l-1]}$.} of length $l$.
\end{inparaenum}
Considering also the fact that for the first $l$ time steps we can only memorize the symbols already seen, we can generalize the construction of a state space representation
given in \citep[p.6]{MoorRaisch1999} to $l$-complete dynamical systems as defined in \eqref{equ:def:lcompl}.
\begin{lemma}\label{lem:CorrPastIndSS}
Let ${\E=\Tuple{\Nbn,\WT,\Beh}}$ be an $l$-complete dynamical system. Furthermore, let 
\begin{equation*}
 \textstyle\X:=\UNION{\BR{\bigcup_{r\in[0,l-1]}\Beh\ll{0,r-1}}}{\BR{\bigcup_{t\in\Nbn}\Beh\ll{t,t+l-1}}}
\end{equation*}
and let $\BehS{}\subseteq(\W\times\X)^{\Nbn}$ \SUCHTHAT $\Tuple{\w,\x}\in\BehS{}$ \IFF
\begin{equation}\label{equ:lem:CorrPastIndSS}
 \x(t)=\begin{cases}
                                       \w\ll{0,t-1}&0\leq t<l\\\w\ll{t-l,t-1}&t\geq l
                                      \end{cases}
\end{equation}
and $w\in\Beh$.
 Then
 $\ES{}=\Tuple{\Nbn,\W,\X,\BehS{}}$ is a state space representation of $\E$.
\end{lemma}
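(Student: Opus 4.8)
The plan is to verify the two defining conditions of a state space representation directly from \REFdef{} of \REFlemS{lem:CorrPastIndSS}: namely that $\ES{}$ satisfies the axiom of state \eqref{equ:StateSpaceDynamicalSystem:1}, and that it is a representation of $\E$, i.e.\ $\projState{\WT}{\BehS{}}=\Beh$. The second of these is almost immediate: by construction $\Tuple{\w,\x}\in\BehS{}$ requires $w\in\Beh$, and conversely every $w\in\Beh$ determines a unique $\x$ via \eqref{equ:lem:CorrPastIndSS}, so projecting onto $\WT$ recovers exactly $\Beh$. One only has to check that the $\x$ produced by \eqref{equ:lem:CorrPastIndSS} actually takes values in the stated state space $\X$, which holds because for $t<l$ we have $\w\ll{0,t-1}\in\Beh\ll{0,t-1}$ (a piece of the left-hand union) and for $t\geq l$ we have $\w\ll{t-l,t-1}\in\Beh\ll{t-l,t-1}$ (a piece of the right-hand union).

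The substantive part is the axiom of state. I would fix two trajectories $\Tuple{w_1,x_1},\Tuple{w_2,x_2}\in\BehS{}$ and a time $t\in\Nbn$ with $x_1(t)=x_2(t)$, and show the concatenation $\CONCAT{\Tuple{w_1,x_1}}{}{t}{\Tuple{w_2,x_2}}$ lies in $\BehS{}$. Writing this concatenation as $\Tuple{w_3,x_3}$, I first argue the key consequence of $x_1(t)=x_2(t)$: by \eqref{equ:lem:CorrPastIndSS}, for $t\geq l$ the equality means $w_1\ll{t-l,t-1}=w_2\ll{t-l,t-1}$, i.e.\ the two signals agree on the $l$ symbols immediately preceding $t$. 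Since $\E$ is $l$-complete it has memory span $l$ (fact (i) recorded after \eqref{equ:def:memoryspan}), so \eqref{equ:def:memoryspan} applies with window $\ll{t-l,t-1}$ and yields $w_3=\CONCAT{w_1}{}{t}{w_2}\in\Beh$. For $t<l$ the equality $x_1(t)=x_2(t)$ forces $w_1\ll{0,t-1}=w_2\ll{0,t-1}$, so $w_3$ simply equals $w_2$ shifted to start from the shared prefix; one checks directly that $w_3\in\Beh$ here too (indeed $w_3$ coincides with $w_2$ on $\ll{t,\infty}$ after the agreeing prefix).

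It then remains to confirm that the state component $x_3$ of the concatenated trajectory is exactly the one assigned to $w_3$ by \eqref{equ:lem:CorrPastIndSS}, so that $\Tuple{w_3,x_3}\in\BehS{}$ rather than merely $w_3\in\Beh$. For $t'<t$ we have $w_3\ll{\cdot}=w_1\ll{\cdot}$ on all relevant windows, so $x_3(t')=x_1(t')$ matches the formula; for $t'\geq t$ with $t'\geq l$ the window $\ll{t'-l,t'-1}$ lies entirely in the region where $w_3$ agrees with the shifted $w_2$ (once $t'-l\geq t$), giving $x_3(t')=x_2(t'-t+t_{\text{offset}})$, again matching. The delicate bookkeeping is at the seam, for indices $t'$ whose window $\ll{t'-l,t'-1}$ straddles the switch point $t$: here one must use precisely the agreement $w_1\ll{t-l,t-1}=w_2\ll{t-l,t-1}$ (equivalently the prefix agreement when $t<l$) to see that $x_3(t')$ computed from $w_3$ equals the value demanded by \eqref{equ:lem:CorrPastIndSS}.

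I expect the main obstacle to be exactly this seam analysis: verifying that the state trajectory of the spliced signal is consistent with the canonical assignment \eqref{equ:lem:CorrPastIndSS} across the transition region, and handling the case split $t<l$ versus $t\geq l$ uniformly. The rest is routine once memory span $l$ is invoked to secure $w_3\in\Beh$; the genuine content of the lemma is that the $l$-symbol memory encoded in \eqref{equ:lem:CorrPastIndSS} is both a valid state (axiom of state) and correctly propagated under concatenation.
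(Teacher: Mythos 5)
Your proposal is correct and takes essentially the same approach as the paper's proof: projection holds by construction, $l$-completeness yields memory span $l$ to secure membership of the spliced signal in $\Beh$ (with the same case split $t<l$, where the concatenation collapses to $w_2$, versus $t\geq l$), and the concatenated state trajectory is then checked against \eqref{equ:lem:CorrPastIndSS}. The seam analysis you anticipate as the delicate step is exactly what the paper resolves by writing out an explicit five-case formula for the spliced state signal, so your outline matches the published argument.
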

\begin{proof}
${\projState{\WT}{\BehS{}}=\Beh}$ holds by construction. To show \eqref{equ:StateSpaceDynamicalSystem:1}, pick $\Tuple{w_1,x_1},\Tuple{w_2,x_2}\in\BehS{}$ and $t'\in\Nbn$ s.t. $x_1(t')=x_2(t')$ and show $\Tuple{w,x}=\CONCAT{\Tuple{w_1,x_1}}{}{t'}{\Tuple{w_2,x_2}}\in\BehS{}$:
observe that
\begin{align}\label{equ:proof:lem:CorrPastIndSS}
 x_1(t')&=w_1\ll{\max\{0,t'-l\},t'-1}\notag\\
 &=w_2\ll{\max\{0,t'-l\},t'-1}=x_2(t').
\end{align}
This implies for $t'<l$ that $w=\CONCAT{w_1}{}{t'}{w_2}=w_2\in\Beh$. From $\E$ being $l$-complete, it follows  that $\E$ has memory span $l$ and therefore\footnote{Observe that, under the premises of \eqref{equ:def:memoryspan}, $\CONCAT{w_1}{}{t}{w_2}=\CONCAT{w_1}{}{t+l}{w_2}$ in the right side of the implication in \eqref{equ:def:memoryspan}.} \eqref{equ:proof:lem:CorrPastIndSS} implies $w=\CONCAT{w_1}{}{t'}{w_2}\in\Beh$ for $t'\geq l$.  
Now remember that \eqref{equ:lem:CorrPastIndSS} holds for $\Tuple{w_1,x_1},\Tuple{w_2,x_2}\in\BehS{}$. Therefore, ${x=\CONCAT{x_1}{}{t'}{x_2}}$ implies that for all $t\in\Nbn$
\begin{equation*}
x(t)=\begin{cases}
\w_1\ll{0,t-1}&\propConj{\BR{t\leq t'}}{\BR{t<l}}\\
\w_1\ll{t-l,t-1}&\propConj{\BR{t\leq t'}}{\BR{t\geq l}}\\
\w_1\ll{0,t'-1}\sconc\w_2\ll{t',t-1}&\propConj{\BR{t'<t<t'+l}}{\BR{t<l}}\\
\w_1\ll{t-l,t'-1}\sconc\w_2\ll{t',t-1}&\propConj{\BR{t'<t<t'+l}}{\BR{t\geq l}}\\
\w_2\ll{t-l,t-1}&\BR{t\geq t'+l}.
\end{cases}
\end{equation*}
Hence, with $w=\CONCAT{w_1}{}{t'}{w_2}\in\Beh$, ${x=\CONCAT{x_1}{}{t'}{x_2}}$ satisfies \eqref{equ:lem:CorrPastIndSS}, proving $\Tuple{w,x}\in\BehS{}$. 
\end{proof}
Since the strongest $l$-complete approximation ${\ElMax{}=}{\allowbreak\Tuple{\Nbn,\allowbreak\WT,\allowbreak\BehlMax{}}}$ of any dynamical system $\E=\Tuple{\Nbn,\WT,\Beh}$ is $l$-complete, we can use \REFlem{lem:CorrPastIndSS} to construct a state space representation of $\ElMax{}$, denoted by $\ElMaxS{}=\Tuple{\Nbn,\W,\X,\BehlMaxS{}}$. Note that the state space $\X$ constructed in \REFlem{lem:CorrPastIndSS} has finitely many elements if $\length{\W}<\infty$. 
\begin{example}\label{exp:statespace}
\normalfont
 Recall that the system in Example~\ref{exp:lcomplete} is $2$-complete
 and \eqref{equ:exp:lcomplete:1} implies
 $\bigcup_{t\in\Nbn}\Beh\ll{t,t+1}=\Set{aa,ab,ba}$. Adding the set $\bigcup_{r\in[0,1]}\Beh\ll{0,r-1}=\Set{\lambda,a,b}$, the state space defined in \REFlem{lem:CorrPastIndSS} for a state space representation of the system $\E$ in \eqref{equ:exp:lcomplete} (and its strongest $2$-complete approximation) is given by 
 $X=\Set{\lambda,a,b,aa,ab,ba}$. Analogously, the state space representation of the strongest $1$-complete approximation of $\E$ has state space ${X=\Set{\lambda,a,b}}$.
\end{example}

\section{Finite State Machine Representations}\label{sec:FSMRepresentations}
Using the notation from \citep[Def.3]{MoorRaisch1999}, a \textbf{finite state machine} is a tuple $\Psys=(\X,\W,\tr,\Xo{})$, where $\X$ (with $\length{\X}<\infty)$ is the state space, $\W$ (with $\length{\W}<\infty)$ is the signal space, $\Xo{}\subseteq\X$ is the set of initial states and $\tr\subseteq\X\times\W\times\X$ is a next state relation.
Furthermore, the \textbf{full behavior induced by $\Psys$} is defined as
\begin{equation}\label{equ:Behtr}
 \Beh_f(\Psys)=\SetCompX{\Tuple{\w,\x}}{
\begin{propConjA}
x(0)\in\Xo{}\\
 \AllQ{t\in\Nbn}{\Tuple{\x(t),\w(t),\x(t+1)}\in\tr}
\end{propConjA}},
\end{equation}
and we say that ${\Psys=(\X,\W,\tr,\Xo{})}$ realizes $\ES{}=(\Nbn,\W, \allowbreak\X, \allowbreak\BehS{})$ if $\Beh_f(\Psys)=\BehS{}$.
\\
Recall that 
in the presented domino game, a transition from one state to another is represented by adding an allowed domino. However, the set of allowed dominos is time dependent since we have to pick from the subset $\Beh\ll{t,t+l}$ of all dominos $\Ds{l+1}$ at time $t$. This suggests that the next state relation of an $l$-complete system is generally time dependent. Therefore, we define a \textbf{time dependent finite state machine} (tFSM) $\Qsys=(\X,\W,\trt,\Xo{})$, where $\X$, $\W$ and $\Xo{}$ are defined as for an FSM and $\trt:\Nbn\rightarrow\twoup{\X\times\W\times\X}$ is a time dependent next state relation. Furthermore, we define the full behavior induced by $\Qsys$ analogously to \eqref{equ:Behtr} by
\begin{equation}\label{equ:Behtrt}
 \Beh_f(\Qsys)=\SetCompX{\Tuple{\w,\x}}{
\begin{propConjA}
x(0)\in\Xo{}\\
 \AllQ{t\in\Nbn}{\Tuple{\x(t),\w(t),\x(t+1)}\in\trt(t)}
\end{propConjA}	}
\end{equation}
and say that ${\Qsys=(\X,\W,\trt,\Xo{})}$ is realizing $\ES{}=(\Nbn,\W,\allowbreak\X,\allowbreak\BehS{})$ if $\Beh_f(\Qsys)=\BehS{}$.
Using the above intuition, we can show that this tFSM can be used to realize the $l$-complete state space system constructed in \REFlem{lem:CorrPastIndSS}. This extends \cite[Thm.12]{MoorRaisch1999} to $l$-complete dynamical systems in the sense of  \eqref{equ:def:lcompl}, including also time variant systems.
\begin{lemma}\label{lem:SforLcomplete}
Let ${\E=\Tuple{\Nbn,\WT,\Beh}}$ be an $l$-complete dynamical system with $\length{W}<\infty$ and $\ES{}=\Tuple{\Nbn,\W,\X,\BehS{}}$ its state space representation constructed in \REFlem{lem:CorrPastIndSS}. Then $\ES{}$ is realized by $Q=(\X,\W,\trt,\Xo{})$ with $\Xo{}=\Set{\lambda}$ and
 \begin{align}\label{equ:delta_timeV}
\trt(t)=&\SetCompX{\Tuple{\xi,\omega,\xi\sconc\omega}}{\propConj{t<l}{\xi\sconc\omega\in\Beh\ll{0,t}}}\\
         &\cup\SetCompX{\Tuple{\xi,\omega,\xi\lb{1,l-1}\sconc\omega}}{\propConj{t\geq l}{\xi\sconc\omega\in\Beh\ll{t-l,t}}}.\notag
\end{align}
\end{lemma}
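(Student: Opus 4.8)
The plan is to prove realization, i.e.\ $\Beh_f(Q)=\BehS{}$, by establishing the two inclusions $\BehS{}\subseteq\Beh_f(Q)$ and $\Beh_f(Q)\subseteq\BehS{}$ separately. The organizing observation is that the two constituent sets of $\trt(t)$ in \eqref{equ:delta_timeV} are selected by disjoint time ranges: for $t<l$ only the first set can contribute, and for $t\ge l$ only the second. Hence along any trajectory the applicable rule is fixed by the current time, which lets me match the two regimes of $\trt$ with the two cases of the state assignment \eqref{equ:lem:CorrPastIndSS}.

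For $\BehS{}\subseteq\Beh_f(Q)$, I would start from $\Tuple{\w,\x}\in\BehS{}$, so that $w\in\Beh$ and $\x$ has the shape \eqref{equ:lem:CorrPastIndSS}. First $\x(0)=\w\ll{0,-1}=\lambda\in\Xo{}$. Then I verify $\Tuple{\x(t),\w(t),\x(t+1)}\in\trt(t)$ by a case split. For $t<l$ I compute $\x(t)\sconc\w(t)=\w\ll{0,t}=\x(t+1)$ (checking the boundary $t=l-1$, where $\x(l)$ is already given by the \emph{second} case of \eqref{equ:lem:CorrPastIndSS}), and note $\w\ll{0,t}\in\Beh\ll{0,t}$ since $w\in\Beh$; this is exactly membership in the first set of \eqref{equ:delta_timeV}. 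For $t\ge l$ I compute $\x(t)\lb{1,l-1}\sconc\w(t)=\w\ll{t-l+1,t}=\x(t+1)$ together with $\x(t)\sconc\w(t)=\w\ll{t-l,t}\in\Beh\ll{t-l,t}$, matching the second set.

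For the converse $\Beh_f(Q)\subseteq\BehS{}$, I would take $\Tuple{\w,\x}\in\Beh_f(Q)$ and show by induction on $t$ that $\x$ satisfies \eqref{equ:lem:CorrPastIndSS}. The base case is $\x(0)=\lambda$. In the inductive step, for $t<l$ only the first branch of $\trt(t)$ can fire, forcing $\x(t+1)=\x(t)\sconc\w(t)$, which propagates the prefix shape $\x(t)=\w\ll{0,t-1}$; for $t\ge l$ only the second branch fires, forcing $\x(t+1)=\x(t)\lb{1,l-1}\sconc\w(t)$, which propagates the window shape $\x(t)=\w\ll{t-l,t-1}$. Simultaneously, the membership side of each transition yields $\w\ll{0,t}\in\Beh\ll{0,t}$ for $t<l$ and $\w\ll{t-l,t}\in\Beh\ll{t-l,t}$ for $t\ge l$. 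The decisive step is then the reindexing $s=t-l$ in the latter family, which gives $\w\ll{s,s+l}\in\Beh\ll{s,s+l}$ for every $s\in\Nbn$; this is precisely the left-hand side of the $l$-completeness characterization \eqref{equ:def:lcompl}, so $l$-completeness of $\E$ delivers $w\in\Beh$. Combined with the verified shape of $\x$, this gives $\Tuple{\w,\x}\in\BehS{}$.

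I expect the main obstacle to be the bookkeeping at the transition $t=l$, where the trajectory switches from the start-up regime (states are growing prefixes $\w\ll{0,t-1}$) to the steady regime (states are length-$l$ windows $\w\ll{t-l,t-1}$), and the associated care in tracking the string operations $\sconc$ and $\lb{1,l-1}$ and degenerate restrictions such as $\w\ll{0,-1}=\lambda$. The genuinely substantive point, by contrast, is the reindexing $s=t-l$ that turns the accumulated window memberships into exactly the hypothesis of $l$-completeness; everything else is routine verification of the state recursion, so I do not anticipate difficulty there.
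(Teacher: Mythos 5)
Your proposal is correct and takes essentially the same route as the paper's proof: the paper merely compresses your two inclusions into a single observation that unfolding \eqref{equ:delta_timeV} inside \eqref{equ:Behtrt} yields exactly the state shape \eqref{equ:lem:CorrPastIndSS} together with the window memberships, and then invokes $l$-completeness of $\E$ precisely as in your reindexing step to conclude $w\in\Beh$, giving $\Beh_f(\Qsys)=\BehS{}$. The only item you omit is the paper's opening remark that $\length{\W}<\infty$ makes $\X$ finite (so that $\Qsys$ is a legitimate tFSM), which is the sole place where that hypothesis is used.
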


\begin{proof}
Finiteness of $X$ follows from  $\length{W}<\infty$ and the construction of $X$ in \REFlem{lem:CorrPastIndSS}. Observe that using \eqref{equ:delta_timeV} in \eqref{equ:Behtrt} gives 
\begin{equation}\label{equ:proof:lem:SforLcomplete}
 \Beh_f(\Qsys)=\SetCompX{\Tuple{\w,\x}}{
\AllQ{t\in\Nbn}{
\begin{propConjA}
 \x(t)=\w\ll{\max\{0,t-l\},t-1}\\
\SplitS{\w\ll{\max\{0,t-l\},t-1}}{\in\Beh\ll{\max\{0,t-l\},t-1}}
\end{propConjA}	}
}
\end{equation}
as the induced full behavior.
Now $l$-completeness of $\E$ and the last line in \eqref{equ:proof:lem:SforLcomplete} imply $w\in\Beh$. Furthermore, \eqref{equ:proof:lem:SforLcomplete} immediately implies that \eqref{equ:lem:CorrPastIndSS} holds, which gives $\Beh_f(\Qsys)=\BehS{}$ from the construction of $\BehS{}$ in \REFlem{lem:CorrPastIndSS}.
\end{proof}

\begin{remark}\label{rem:easierPsi}
Recall the gedankenexperiment in Section~\ref{sec:lcom} and observe that in the construction of \REFlem{lem:CorrPastIndSS} the state represents the \enquote{recent past} of the signal $w$, i.e., a finite string of length $l$ if $t\geq l$.
However, at start up, i.e., for $t<l$, no \enquote{past} of this length exists. Then the state describes the  available past information, i.e., a finite string of length $r\in[0,l-1]$ contained in the set $\Beh\ll{0,r-1}$.
Therefore, assuming $\xi=\Tuple{\omega_0,\hdots,\omega_{r-1}}$ s.t. $\lengthw{\xi}=r<l$ implies that $\Tuple{\xi,\omega,\xi'}\in\trt(t)$ \IFF $\xi'=\xi\sconc\w=\Tuple{\omega_0,\hdots,\omega_{r-1},\omega}$ is the extension of $\xi$ by $\omega$ and a valid initial behavior, i.e., $\xi'\in\Beh\ll{0,r}$.
Now remember that the domino game describes the admissible behavior by appending domino pie\-ces of length $l+1$ such that the last $l$ symbols match. Therefore, assuming $\xi=\Tuple{\omega_0,\hdots,\omega_{l-1}}$ (i.e., $\lengthw{\xi}=l$) implies  ${\Tuple{\xi,\omega,\xi'}\in\trt(t)}$ \IFF  $\xi'=\xi\lb{1,l-1}\sconc\omega=\Tuple{\omega_1,\hdots,\omega_{l-1},\omega}$ and ${\xi\sconc\omega=\Tuple{\omega_0,\hdots,\omega_{l-1},\omega}\in\Beh\ll{t-l,t}}$, i.e., ${\xi\sconc\omega}$ is a domino that is currently allowed to be  attached. For an illustration of the last case, see Figure~\ref{fig:CorrespondanceDominoGameTFSM}.
\end{remark}

\begin{figure}[htb!]
\begin{center}
 \begin{tikzpicture}[auto,scale=1]
 \begin{pgfonlayer}{background}
   \path      (0,0) node (o) {
      \includegraphics[width=0.215\linewidth]{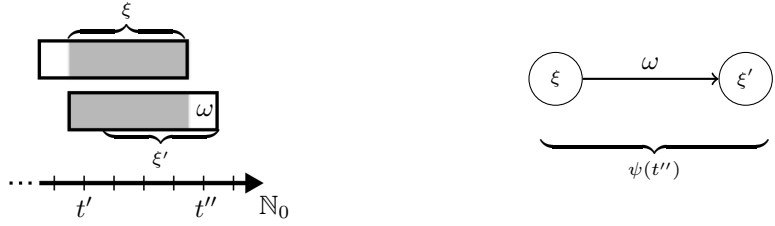}};
 \end{pgfonlayer}
 \begin{pgfonlayer}{foreground}
  \path (o.north west)+(1.5,-0.45) node (i) {};
    \path (o.north west)+(1.75,-0) node {$\overbrace{\phantom{\,a\,a\,a\,a\,a\,a\,}}^{ \xi}$};
  \path (i)+(0.7,-1.1) node (ii){$\underbrace{\phantom{\,a\,a\,a\,a\,a\,a\,}}_{\xi'}$};
  \path (ii.north east)+(-0.3,-0.05) node (iii){$\omega$};
  \path (o.south west)+(1.2,0) node (t2) {$t'$};
  \path (t2)+(1.6,0) node (t6) {$t''$};
  \path (t6)+(0.9,0) node (t6) {$\Nbn$};
 \end{pgfonlayer}
    \begin{scriptsize} 
\node[state] (a) at (5.5,0.5) {$\xi$};
\node[state] (b) at (8,0.5) {$\xi'$};
\end{scriptsize} 
\SFSAutomatEdge{a}{\omega}{b}{}{}
\node (brace) at (6.8,-0.5) {$\underbrace{\phantom{\,a\,a\,a\,a\,a\,a\,a\,a\,a\,a\,a\,a\,}}_{\psi(t'')}$};
  \end{tikzpicture}
    \end{center}
\vspace{-0.5cm}
  \caption{Correspondence of one step in the domino game (left) to one time dependent transition in a tFSM (right), where $t'=t''-l$ and $l=4$.}\label{fig:CorrespondanceDominoGameTFSM}
\end{figure}

Consider the strongest $l$-complete approximation $\ElMax{}=\Tuple{\Nbn,\WT,\BehlMax{}}$ of any dynamical system $\E=\Tuple{\Nbn,\WT,\Beh}$. Then the state space representation $\ElMaxS{}=\Tuple{\Nbn,\W,\X,\BehlMaxS{}}$ suggested in \REFlem{lem:CorrPastIndSS} can obviously be realized by the tFSM $\Qsys$ in \REFlem{lem:SforLcomplete}.

\begin{example}\label{exp:FSM}
\normalfont
 Using the state spaces derived in Example~\ref{exp:statespace} and the construction of the time dependent next state relation in \eqref{equ:delta_timeV}, we can construct the tFSMs $\Qsys^{1}$ and $\Qsys^{2}$, depicted in Figure~\ref{fig:FSMtype}, realizing the strongest $1$- and $2$-complete approximations of the system $\E$ in \eqref{equ:exp:lcomplete}, respectively. As $\E$ is $2$-complete, the tFSM $\Qsys^{2}$ is also a realization of $\E$. Observe that the tFSM $\Qsys^{1}$ reduces to a standard FSM (due to \eqref{equ:exp:lcomplete:1}). In $\Qsys^{2}$ the transition from state $aa$ to itself is time dependent, because three sequential $a$'s are only allowed at start up. In both figures the initial state is indicated by an arrow pointing to it from \enquote{outside}.
 \end{example}
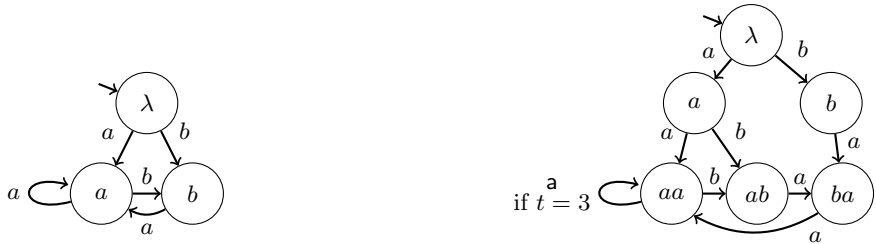
\begin{figure}[htb]
\begin{center}
 \begin{tikzpicture}[auto,scale=1.5]
    \begin{small} 
\node (dummy) at (-0.5,-0) {};  
\node[state] (p0) at (0,-0.2) {$\lambda$};
\node[state] (a) at (-0.4,-1) {$a$};
\node[state] (b) at (0.4,-1) {$b$};

\SFSAutomatEdge{dummy}{}{p0}{}{}
\SFSAutomatEdge{p0}{a}{a}{}{swap}
\SFSAutomatEdge{p0}{b}{b}{}{}
\SFSAutomatEdge{a}{a}{a}{loop left}{}
\SFSAutomatEdge{a}{b}{b}{}{}
\SFSAutomatEdge{b}{a}{a}{bend left}{}

\end{small} 
    \begin{small} 
\node (dummy) at (4.8,0.6) {};  
\node[state] (p0) at (5.3,0.4) {$\lambda$};
\node[state] (a) at (4.8,-0.2) {$a$};
\node[state] (b) at (6,-0.2) {$b$};
\node[state] (aa) at (4.6,-1) {$aa$};
\node[state] (ab) at (5.35,-1) {$ab$};
\node[state] (ba) at (6.1,-1) {$ba$};
\SFSAutomatEdge{dummy}{}{p0}{}{}
\SFSAutomatEdge{p0}{a}{a}{}{swap,pos=0.4}
\SFSAutomatEdge{p0}{b}{b}{}{pos=0.4}
\SFSAutomatEdge{a}{a}{aa}{}{swap}
\SFSAutomatEdge{a}{b}{ab}{}{}
\SFSAutomatEdge{b}{a}{ba}{}{pos=0.8}
\SFSAutomatEdge{aa}{b}{ab}{}{}
\SFSAutomatEdge{ab}{a}{ba}{}{}
\SFSAutomatEdge{ba}{a}{aa.south east}{bend left}{pos=0.15}
\SFSAutomatEdge{aa}{\begin{matrix}\mathsf{a}\\[-1mm]\text{if}~t=3\end{matrix}}{aa}{loop left}{}
\end{small} 
\end{tikzpicture}
\end{center}
 \vspace{-0.3cm}
 \caption{tFSMs $\Qsys^{1}$ (left) and $\Qsys^{2}$ (right) constructed in Example~\ref{exp:FSM}, realizing $\Sigma^{1^\Uparrow}_{S}$  and $\Sigma^{2^\Uparrow}_{S}$  of \eqref{equ:exp:lcomplete}, respectively.}\label{fig:FSMtype}
 \end{figure}

\section{Asynchronous Properties}\label{sec:AsyncProp}

Obviously, one could render the next state relation in \eqref{equ:delta_timeV} time independent by using time as an additional state variable. However, this would lead to an infinite state set. We want to characterize systems naturally allowing an FSM realization, (i.e., a time independent next state relation). Observe that such systems must allow for concatenation of state trajectories that reach the same state asynchronously (i.e., at different times). This is formalized in the following definition inspired by \cite[p.59]{JuliusPhdThesis2005}.
\begin{definition}\label{def:StateSpaceDynamicalSystemTA}
Let $\ES{}=\Tuple{\Nbn,\WT,\X,\BehS{}}$ be a dynamical system with internal signal space $X$.
Then $\ES{}$ is an \textbf{asynchronous state space dynamical system} if
\begin{equation}\label{equ:StateSpaceDynamicalSystem:2}
 \AllQ{\Tuple{w_1,x_1},\Tuple{w_2,x_2}\in\BehS{}, t_1,t_2\in\Nbn}{\propImp*{x_1(t_1)=x_2(t_2)}{\CONCAT{\Tuple{w_1,x_1}}{t_1}{t_2}{\Tuple{w_2,x_2}}\in\BehS{}}}.
\end{equation}
\end{definition}
It can be easily observed that every asynchronous state space dynamical system is also a synchronous\footnote{To clearly 
distinguish the asynchronous state property and the (standard) state property from Section~\ref{sec:statespace}, we will in the remainder of this paper refer to the latter one as \textbf{synchronous} state property. The same convention is applied to other properties as memory span and $l$-completeness.}
state space dynamical system since we can always pick $t_1=t_2=t$ in \eqref{equ:StateSpaceDynamicalSystem:2} and get \eqref{equ:StateSpaceDynamicalSystem:1}.\\
It is important to understand that the asynchronous state property does not imply and is not implied by the time invariance property of dynamical systems, since it depends on the realization of the system. 
This is illustrated by the following example.
\begin{example}\label{exp:FSMvsFSMtype}\normalfont
The FSM $P=\Tuple{\Set{\xi_1,\xi_2},\Set{a,b},\delta,\Set{\xi_1}}$ in Figure~\ref{fig:rem:FSMvsFSMtype:2} (left) realizes the asynchronous state space system $\ES{a}=\Tuple{\Nbn,\Set{a,b},\Set{\xi_1,\xi_2},\Beh_f(P)}$ representing the time variant system $\E_a=\Tuple{\Nbn,\Set{a,b},\Set{ab^\omega}}$.  Furthermore, the tFSM $Q=\Tuple{\Set{\xi},\Set{a,b},\psi,\Set{\xi}}$ in Figure~\ref{fig:rem:FSMvsFSMtype:2} (right) realizes the synchronous (but not asynchronous) state space system $\ES{b}=\Tuple{\Nbn,\Set{a,b},\Set{\xi},\Beh_f(Q)}$ representing the time invariant system $\E_b=\Tuple{\Nbn,{a,b},\Set{ab^\omega,b^\omega}}$.
\end{example}
\begin{figure}[htb!]
 \begin{center}
  \begin{tikzpicture}[auto,scale=1.3]
   \begin{scriptsize} 
\node (dummy) at (-0.6,0) {}; 
\node[state] (a) at (0,0) {$\xi_1$};
\node[state] (b) at (1,0) {$\xi_2$};
\end{scriptsize} 
\SFSAutomatEdge{dummy}{}{a}{}{}
\SFSAutomatEdge{a}{a}{b}{}{}
\SFSAutomatEdge{b}{b}{b}{loop right}{}
   \begin{scriptsize} 
   \node (dummy) at (5,0) {}; 
\node[state] (a) at (5.6,0) {$\xi$};
\end{scriptsize} 
\SFSAutomatEdge{dummy}{}{a}{}{}
\SFSAutomatEdge{a}{\begin{matrix}
                    a~\text{if}~t=0\\[0.1cm]
                    b~\forall t\in\Nbn
                   \end{matrix}
}{a}{loop right}{}
\end{tikzpicture}
\end{center}
\vspace{-0.3cm}
\caption{FSM $P$ (left) and tFSM $Q$ (right) in Example~\ref{exp:FSMvsFSMtype}.}\label{fig:rem:FSMvsFSMtype:2}
 \end{figure}
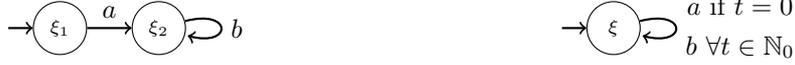

Recall that the concepts of synchronous state property and synchronous memory span are strongly related, since the synchronous state property implies that $\E_x=\Tuple{\Nbn,\X,\projState{\X}{\BehS{}}}$ has memory span one. To get the same relation for the asynchronous case, we define an asynchronous memory span.
\begin{definition}\label{def:memoryspan_TA}
The dynamical system  ${\E=\Tuple{\Nbn,\WT,\Beh}}$ has \textbf{asynchronous memory span $l$} if
\begin{equation}\label{equ:memoryspan:1}
 \AllQ{w_1,w_2\in\Beh,t_1,t_2\in\Nbn}{\propImp*{w_1\ll{t_1,t_1+l-1}=w_2\ll{t_2,t_2+l-1}}{\CONCAT{w_1}{t_1}{t_2}{w_2}\in\Beh}.}
\end{equation}
\end{definition}
As expected, it can be easily seen that every system with asynchronous memory span $l$ also has synchronous memory span $l$.\\
For systems with an asynchronous memory span, the do\-mi\-no game presented in \REFsec{sec:lcom} is significantly simplified. At any time $t$ we can attach any domino from the whole domino set $\Ds{l+1}=\bigcup_{t'\in\Nbn}\Beh\ll{t',t'+l}$, as long as the first $l$ symbols of the newly attached domino match the last $l$ symbols of the previous domino. Recall that this implies time independent transitions in a corresponding FSM realization, which is what we are aiming at. 
Having this interpretation in mind, the definition of asynchronous $l$-completeness comes as no surprise. 

\begin{definition}\label{def:lcnessSC_TA}
The system ${\E=\Tuple{\Nbn,\WT,\Beh}}$ is \textbf{asynchro\-nously $l$-complete} if
\begin{equation}\label{equ:lcnessSC:1}
\propAequ{
\begin{propConjA}
w\ll{0,l}\in\Beh\ll{0,l}\\
 \AllQ{t\in\Nbn}{w\ll{t,t+l}\in \bigcup_{t'\in\Nbn}\Beh\ll{t',t'+l}}
\end{propConjA}}{w\in\Beh}.
\end{equation}
\end{definition}

Again, it is easily verified that a system is synchronously $l$-complete if it is asynchronously $l$-complete.

\begin{remark}\label{rem:firstLine}
The second line in \eqref{equ:lcnessSC:1} describes that the possible future evolution of the system depends on the $l$ past values of a signal if $t\geq l$. However, at start up this \enquote{past} is not yet fully available. Therefore, the first line in \eqref{equ:lcnessSC:1} is needed to ensure that all signals start with an allowed initial pattern. However, observe that if $\E$ is time invariant, the condition $\sigma\Beh\subseteq\Beh$ implies $\AllQ{t\in\Nbn}{\sigma^t\Beh\ll{t,t+l}\subseteq\Beh\ll{0,l}}$ giving $\bigcup_{t'\in\Nbn}\Beh\ll{t',t'+l}=\Beh\ll{0,l}$. Then the first line in \eqref{equ:lcnessSC:1} is implied by the second line and is therefore unnecessary. This is stated in the following lemma.
\end{remark}

\begin{lemma}\label{lem:lcnessTI_TA}
Let $\E=\Tuple{\Nbn,\WT,\Beh}$ be a time invariant dynamical system.  Then $\E$ is asynchronously $l$-complete \IFF 
\begin{equation}\label{equ:lem:lcnessTI_TA}
\propAequ{\AllQ*{t\in\Nbn}{w\ll{t,t+l}\in \Beh\ll{0,l}}}{w\in\Beh}.
\end{equation}
\end{lemma}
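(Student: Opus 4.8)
The plan is to show the two conditions equivalent by collapsing the defining condition of asynchronous $l$-completeness \eqref{equ:lcnessSC:1} onto \eqref{equ:lem:lcnessTI_TA} using time invariance. The whole argument hinges on a single set identity, already anticipated in Remark~\ref{rem:firstLine}, namely
\[
\bigcup_{t'\in\Nbn}\Beh\ll{t',t'+l}=\Beh\ll{0,l}.
\]
Once this identity is in hand the proof is purely a matter of rewriting, so I would establish it first and treat everything afterwards as bookkeeping.

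To prove the identity, the inclusion $\supseteq$ is immediate by taking $t'=0$. For $\subseteq$, I would fix $t'\in\Nbn$ and an arbitrary $w\in\Beh$ and argue that the finite string $w\ll{t',t'+l}$ already lies in $\Beh\ll{0,l}$. The key step is to apply the backward shift: time invariance $\sigma\Beh\subseteq\Beh$ gives $\sigma^{t'}\Beh\subseteq\Beh$ by an easy induction on $t'$, so $\sigma^{t'}w\in\Beh$. By the definition of $\sigma^{t'}$ we have $(\sigma^{t'}w)(k)=w(k+t')$, whence $(\sigma^{t'}w)\ll{0,l}$ and $w\ll{t',t'+l}$ are the same finite string; since $\sigma^{t'}w\in\Beh$, this string lies in $\Beh\ll{0,l}$, as required.

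With the identity established, I would substitute $\bigcup_{t'\in\Nbn}\Beh\ll{t',t'+l}=\Beh\ll{0,l}$ into the second conjunct of the left-hand side of \eqref{equ:lcnessSC:1}, turning it into $\AllQ{t\in\Nbn}{w\ll{t,t+l}\in\Beh\ll{0,l}}$. The first conjunct, $w\ll{0,l}\in\Beh\ll{0,l}$, is then exactly the $t=0$ instance of this universally quantified statement and is therefore redundant. Hence the left-hand side of \eqref{equ:lcnessSC:1} is logically equivalent to $\AllQ{t\in\Nbn}{w\ll{t,t+l}\in\Beh\ll{0,l}}$, and since the right-hand sides (both $w\in\Beh$) coincide, \eqref{equ:lcnessSC:1} and \eqref{equ:lem:lcnessTI_TA} are equivalent, which is precisely the claim.

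The only step requiring any care is the set identity, and within it the observation that restriction and shift match up correctly at the level of finite strings; this is routine but depends on getting the index arithmetic in the definitions of $\sigma^{t'}$ and $\ll{\cdot,\cdot}$ right. I do not expect any genuine obstacle beyond this, since the structure of the argument is simply \emph{rewrite using time invariance, then drop the redundant $t=0$ clause}.
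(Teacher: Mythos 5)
Your proof is correct and takes essentially the same approach as the paper: both arguments reduce the claim to the identity $\bigcup_{t'\in\Nbn}\Beh\ll{t',t'+l}=\Beh\ll{0,l}$ under time invariance (the paper simply cites Remark~\ref{rem:firstLine} for it) and then observe that the first conjunct of \eqref{equ:lcnessSC:1} is the redundant $t=0$ instance of the second, so the two characterizations coincide. Your only addition is spelling out the shift-versus-restriction bookkeeping behind the identity, which the paper leaves implicit.
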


\begin{proof}
As pointed out in Remark~\ref{rem:firstLine}, time invariance of $\E$ implies $\bigcup_{t'\in\Nbn}\Beh\ll{t',t'+l}=\Beh\ll{0,l}$. Hence \eqref{equ:lem:lcnessTI_TA} and \eqref{equ:lcnessSC:1} are identical.
\end{proof}

\begin{remark}
 Remember that a synchronously $l$-complete system always has synchronous memory span $l$. However, the reverse implication only holds if the system is complete to ensure that its behavior can be fully described by a local property such as a finite memory span. This statement was proven in \citep[prop.1.1]{Willems1989} for the synchronous case and can be generalized to the asynchronous case, where the proof follows the same lines. This emphasizes that the asynchronous properties extend the behavioral systems theory in a consistent way.
\end{remark}

\begin{remark}\label{rem:AnmerkungMoorRaisch:2}
 Recall from Remark~\ref{rem:AnmerkungMoorRaisch} that in \cite{MoorRaisch1999} $l$-complete\-ness for time invariant systems is defined by \eqref{equ:lem:lcnessTI_TA} (instead of \eqref{equ:def:lcompl}). Therefore, \REFlem{lem:lcnessTI_TA} implies that this weaker version of $l$-completeness from \cite{MoorRaisch1999} coincides with the property of asynchronous $l$-completeness for time-invariant systems. 
\end{remark}

\begin{example}\label{exp:alcomplete}
\normalfont
 We now investigate the asynchronous $l$-com\-pleteness properties of the system $\E$ in \eqref{equ:exp:lcomplete}. Since $\E$ is time invariant, it follows from Remark~\ref{rem:firstLine} that $\bigcup_{t'\in\Nbn}\Beh\ll{t',t'+l}=\Beh\ll{0,l}$. Therefore, the simplified domino game for $l=1$ is identical to the one played in Example~\ref{exp:lcomplete}, implying that the system \eqref{equ:exp:lcomplete} is not asynchronously $1$-complete. For $l=2$, observe that in the simplified domino game we are still allowed to use the piece $aaa$ from the set $\Beh\ll{0,2}$ at any time $t>0$. Therefore, more than two sequential $a$'s can be produced by this game implying that the system \eqref{equ:exp:lcomplete} is \emph{not} asynchronously $2$-complete. Extending $l$ to $l=3$ gives the domino set
 $\bigcup_{t'\in\Nbn}\Beh\ll{t',t'+3}=\Beh\ll{0,3}=\Set{aaab,aaba,abaa,baab}$.
 Now, playing the simplified domino game ensures that always three symbols have to match, preventing the piece $aaab$ to be attachable for $t>0$. Hence, the resulting behavior is identical to $\Beh$. This implies that the system \eqref{equ:exp:lcomplete} is asynchronously $3$-complete.
\end{example}

If we recall that the memory of the system is still given by the last $l$ symbols of the signal $w$ it is obvious that we can construct a state space representation of an asynchronously $l$-complete approximation exactly as given in \REFlem{lem:CorrPastIndSS} for the synchronous case. However, dealing with the asynchronous version, we can realize it by an FSM.

\begin{lemma}\label{lem:realizationTA}
Let ${\E=\Tuple{\Nbn,\WT,\Beh}}$ be an asynchronously $l$-com\-plete dynamical system. 
Then $\ES{}=\Tuple{\Nbn,\W,\X,\BehS{}}$ from \REFlem{lem:CorrPastIndSS} is an asynchronous state space representation of $\E$.
Furthermore, if $\length{W}<\infty$, $\ES{}$ is realized by the finite state machine $\Psys=(\X,\W,\tr,\Xo{})$ with $\Xo{}=\Set{\lambda}$ and 
 \begin{align}\label{equ:delta_timeInv}
\tr= &\SetCompX{\Tuple{\xi,\omega,\xi\sconc\omega}}{\propConj{\lengthw{\xi}<l}{\xi\sconc\omega\in\Beh\ll{0,\lengthw{\xi}}}}\\
         &\cup\SetCompX{\Tuple{\xi,\omega,\xi\lb{1,l-1}\sconc\omega}}{\propConj{\lengthw{\xi}=l}{\textstyle\xi\sconc\omega\in\bigcup_{t'\in\Nbn}\Beh\ll{t',t'+l}}}.\notag
\end{align}

\end{lemma}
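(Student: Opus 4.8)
The statement packages two claims. First, that the state space system $\ES{}$ of \REFlem{lem:CorrPastIndSS} satisfies the asynchronous state axiom \eqref{equ:StateSpaceDynamicalSystem:2}; it is already a representation, i.e.\ $\projState{\WT}{\BehS{}}=\Beh$, because asynchronous $l$-completeness implies synchronous $l$-completeness, so \REFlem{lem:CorrPastIndSS} applies verbatim and hands us both $\projState{\WT}{\BehS{}}=\Beh$ and the synchronous axiom \eqref{equ:StateSpaceDynamicalSystem:1} for free. Second, that the time-\emph{independent} machine $\Psys$ of \eqref{equ:delta_timeInv} realizes $\ES{}$. I would prove the two claims separately, reusing \REFlem{lem:CorrPastIndSS} and mirroring the computation in \REFlem{lem:SforLcomplete}.

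For the asynchronous state axiom, fix $\Tuple{w_1,x_1},\Tuple{w_2,x_2}\in\BehS{}$ and $t_1,t_2$ with $x_1(t_1)=x_2(t_2)$. By \eqref{equ:lem:CorrPastIndSS} the state at time $t$ is a string of length $\min\{t,l\}$, so the equality of states first forces a length match: if either index is $<l$ then $t_1=t_2$, and otherwise $t_1,t_2\ge l$. Whenever $t_1=t_2$ the claim is exactly the synchronous axiom \eqref{equ:StateSpaceDynamicalSystem:1} granted by \REFlem{lem:CorrPastIndSS}, so the genuinely new case is $t_1\ne t_2$, which forces $t_1,t_2\ge l$. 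There the state equality reads $w_1\ll{t_1-l,t_1-1}=w_2\ll{t_2-l,t_2-1}$, i.e.\ the two \enquote{recent pasts} coincide. Writing $w_3=\CONCAT{w_1}{t_1}{t_2}{w_2}$, I would show $w_3\in\Beh$ through \eqref{equ:lcnessSC:1}: every length-$(l+1)$ window of $w_3$ lies strictly before the seam (then it equals a window of $w_1$), at or after the seam (then it equals a shifted window of $w_2$), or straddles it — and for a straddling window the matching of recent pasts lets me rewrite the $w_1$-part as the corresponding $w_2$-part, so the whole window again equals a window of $w_2$; hence every window lies in $\bigcup_{t'}\Beh\ll{t',t'+l}$, and together with the initial-window clause this gives $w_3\in\Beh$. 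Finally I would verify, by the same five-way case split on $t$ relative to the seam used in \REFlem{lem:CorrPastIndSS}, that $x_3=\CONCAT{x_1}{t_1}{t_2}{x_2}$ satisfies \eqref{equ:lem:CorrPastIndSS}, so $\Tuple{w_3,x_3}\in\BehS{}$.

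For the realization I would substitute \eqref{equ:delta_timeInv} into \eqref{equ:Behtr}. Starting from $\x(0)=\lambda$, the first clause of $\tr$ is the only one applicable while $\lengthw{\x(t)}<l$ and forces $\x(t)=\w\ll{0,t-1}$ with $\w\ll{0,t}\in\Beh\ll{0,t}$ for $t<l$; once $\lengthw{\x(t)}=l$ the second clause takes over and slides the window, forcing $\x(t)=\w\ll{t-l,t-1}$ with $\w\ll{t-l,t}\in\bigcup_{t'}\Beh\ll{t',t'+l}$ for $t\ge l$. Thus $\Beh_f(\Psys)$ consists of exactly those $\Tuple{\w,\x}$ for which $\x(t)=\w\ll{\max\{0,t-l\},t-1}$ and the two accumulated families of window conditions hold. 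Reading these conditions against \eqref{equ:lcnessSC:1} yields $\w\in\Beh$, the state equation is \eqref{equ:lem:CorrPastIndSS}, and conversely any $\Tuple{\w,\x}\in\BehS{}$ satisfies all transitions; both inclusions give $\Beh_f(\Psys)=\BehS{}$, with finiteness of $\X$ inherited from \REFlem{lem:CorrPastIndSS}.

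The delicate step — the one I would spend most care on — is the interface at $t=l$ between the growing-prefix regime and the sliding-window regime, equivalently the boundary instance $t_1=l$ of the asynchronous axiom, where an initial-segment state meets a sliding-window state. Here the transitions of $\Psys$ only supply $\w\ll{0,l-1}\in\Beh\ll{0,l-1}$ and $\w\ll{0,l}\in\bigcup_{t'}\Beh\ll{t',t'+l}$, whereas the first clause of \eqref{equ:lcnessSC:1} demands the strictly stronger $\w\ll{0,l}\in\Beh\ll{0,l}$; closing this gap is precisely where the time-independence of the domino set is most dangerous and where the full strength of asynchronous $l$-completeness (and of the asynchronous memory span it entails) must be brought to bear. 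I would therefore isolate this boundary case and treat it explicitly rather than folding it into the generic window argument.
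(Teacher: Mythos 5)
Your overall skeleton is exactly the route the paper intends: the paper's own proof is omitted (``follows the same lines as the proofs of \REFlem{lem:CorrPastIndSS} and \REFlem{lem:SforLcomplete}''), and your reduction of the representation part to \REFlem{lem:CorrPastIndSS}, the length argument forcing $t_1=t_2$ or $t_1,t_2\geq l$, and the seam/window analysis for $t_1\neq t_2$, $t_1>l$ are all correct. The problem is that your proposal stops precisely where the proof has to do real work. The boundary case $t_1=l$ (equivalently, the interface at time $l$ in the realization argument) is flagged as ``delicate'' and then deferred, with the claim that it can be closed by ``the full strength of asynchronous $l$-completeness (and of the asynchronous memory span it entails)''. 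That claim is never substantiated, and it cannot be: Definition~\ref{def:lcnessSC_TA} does \emph{not} entail asynchronous memory span $l$. The paper's remark asserting that Willems' Prop.~1.1 ``generalizes to the asynchronous case'' fails exactly when the seam meets the initial window, which is the very case you need; so your argument is circular at its critical step -- it invokes a consequence of the boundary case to prove the boundary case.

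In fact the gap is unfillable, because the statement itself is false under \eqref{equ:lcnessSC:1} as printed. Take $\W=\{a,b,c\}$, $l=1$, and let $\Beh$ be the set of all $w$ with $w(0)=a$, $w(1)=b$, and $w(2k)=a$, $w(2k+1)\in\{b,c\}$ for all $k\geq 1$. Then $\Beh\ll{0,1}=\{ab\}$ and $\bigcup_{t'\in\Nbn}\Beh\ll{t',t'+1}=\{ab,ac,ba,ca\}$, and $\E$ is asynchronously $1$-complete: any $w$ with $w\ll{0,1}=ab$ whose windows all lie in that set must alternate $a$ with $\{b,c\}$ and have $w(1)=b$, hence lies in $\Beh$. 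Now $w_1=(ab)^\omega$ and $w_2=abac(ab)^\omega$ are in $\Beh$, and by \eqref{equ:lem:CorrPastIndSS} they have the same state $x_1(1)=a=x_2(3)$; yet $\CONCAT{w_1}{1}{3}{w_2}=ac(ab)^\omega\notin\Beh$, so $\ES{}$ violates \eqref{equ:StateSpaceDynamicalSystem:2}. Likewise the run $\lambda\rightarrow a\rightarrow c\rightarrow a\rightarrow b\rightarrow a\cdots$ uses only transitions of \eqref{equ:delta_timeInv}, so the FSM accepts $ac(ab)^\omega$ and $\Beh_f(\Psys)\supsetneq\BehS{}$: both claims of the lemma fail for this (not time invariant) system. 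The root cause is the one you isolated: at time $l$ the state of \REFlem{lem:CorrPastIndSS} merges the initial prefix $w\ll{0,l-1}$ with interior windows, so it can never certify the length-$(l+1)$ condition $w\ll{0,l}\in\Beh\ll{0,l}$ demanded by the first line of \eqref{equ:lcnessSC:1}, only $w\ll{0,l}\in\bigcup_{t'\in\Nbn}\Beh\ll{t',t'+l}$. Your argument (and the lemma) become correct either when these two sets coincide -- e.g.\ for time invariant systems, by Remark~\ref{rem:firstLine} -- or if the first line of \eqref{equ:lcnessSC:1} is weakened to $w\ll{0,l-1}\in\Beh\ll{0,l-1}$, which is precisely the condition $\Psys$ actually enforces. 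So your diagnosis of where the difficulty sits is sharper than the paper's omitted proof, but the promised repair does not exist under the paper's definitions.
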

\begin{proof}
The proof follows the same lines as the proofs of \REFlem{lem:CorrPastIndSS} and \REFlem{lem:SforLcomplete} and is therefore omitted.
\end{proof}

\begin{remark}
 The next state relation $\delta$ in \eqref{equ:delta_timeInv} can be interpreted analogously to $\psi$ in \eqref{equ:delta_timeV}, see the discussion in Remark~\ref{rem:easierPsi}. Observe that now the condition in the last line of \eqref{equ:delta_timeInv} is weakened in the sense that ${\xi\sconc\omega}$ can be any domino in the gedankenexperiment in Section~\ref{sec:lcom}.
\end{remark}

\begin{example}\label{exp:astatespace}
\normalfont
 Recall from Example~\ref{exp:alcomplete} that the system \eqref{equ:exp:lcomplete} in Example~\ref{exp:lcomplete} is asynchronously $3$-complete
 and that \eqref{equ:exp:lcomplete:2} implies
 $\bigcup_{t\in\Nbn}\Beh\ll{t,t+2}=\Set{aaa,aab,aba,baa}$. Adding the set $\bigcup_{r\in[0,2]}\Beh\ll{0,r-1}=\Set{\lambda,a,b,aa,ab,ba}$, the state space defined in \REFlem{lem:CorrPastIndSS} with $l=3$ for the system in \eqref{equ:exp:lcomplete} is given by $X=\{\lambda,a,b,aa,ab,ba,aaa,aab,\allowbreak aba,baa\}$.
 Using this state space and the construction of the next state relation in \eqref{equ:delta_timeInv}, we can construct an FSM $\Psys$ realizing the system \eqref{equ:exp:lcomplete} in Example~\ref{exp:lcomplete}. The result is depicted in Figure~\ref{fig:FSM}.
 \end{example}
\begin{figure}[htb]
 \begin{center}
  \begin{tikzpicture}[auto,scale=1.5]
    \begin{small} 
\node (dummy) at (-0.5,-0) {}; 
\node[state] (p0) at (0,-0.2) {$\lambda$};
\node[state] (a) at (-1,-0.4) {$a$};
\node[state] (b) at (1,-0.4) {$b$};
\node[state] (aa) at (-1.5,-1) {$aa$};
\node[state] (ab) at (-0.5,-1) {$ab$};
\node[state] (ba) at (0.5,-1) {$ba$};
\node[state] (aaa) at (-2,-1.6) {$aaa$};
\node[state] (aab) at (-1,-1.6) {$aab$};
\node[state] (aba) at (0,-1.6) {$aba$};
\node[state] (baa) at (1,-1.6) {$baa$};

\SFSAutomatEdge{dummy}{}{p0}{}{}
\SFSAutomatEdge{p0}{a}{a}{}{swap}
\SFSAutomatEdge{p0}{b}{b}{}{}
\SFSAutomatEdge{a}{a}{aa}{}{swap}
\SFSAutomatEdge{a}{b}{ab}{}{}
\SFSAutomatEdge{b}{a}{ba}{}{swap}
\SFSAutomatEdge{aa}{a}{aaa}{}{swap}
\SFSAutomatEdge{aa}{b}{aab}{}{}
\SFSAutomatEdge{ab}{a}{aba}{}{}
\SFSAutomatEdge{ba}{a}{baa}{}{}

\SFSAutomatEdge{aaa}{b}{aab}{}{}
\SFSAutomatEdge{aab}{a}{aba}{}{}
\SFSAutomatEdge{aba}{a}{baa}{}{}
\SFSAutomatEdge{baa}{b}{aab.south east}{bend left}{pos=0.8}
\end{small}
\end{tikzpicture}
\end{center}
\vspace{-0.7cm}
\caption{FSM $\Psys$ realizing  \eqref{equ:exp:lcomplete}.}\label{fig:FSM}
 \end{figure}

\begin{remark}\label{rem:STIversusTI}
Observe that \eqref{equ:lem:lcnessTI:neu} in \REFlem{lem:lcnessTI} and \eqref{equ:lem:lcnessTI_TA} in \REFlem{lem:lcnessTI_TA} are identical. Therefore, \REFlem{lem:lcnessTI} and \ref{lem:lcnessTI_TA} imply that the asynchronous and the synchronous $l$-com\-pleteness property coincide for strictly time invariant systems. 
As a direct consequence, the state space representation of a strictly time invariant (synchronously) $l$-complete system can be realized by the FSM $\Psys$ constructed in \REFlem{lem:realizationTA}.
\end{remark}

\section{Asynchronous $l$-Complete Approximation}\label{sec:alcomplapprox}

Using the asynchronous $l$-completeness property introduced in \REFdef{def:lcnessSC_TA}, we can construct asynchronous $l$-com\-plete approximations analogously to their synchronous versions in \REFsec{sec:lcom}. 

\begin{definition}\label{def:AsyncLcompApprox}
Let $\E{}=\Tuple{\Nbn,\W,\Beh}$ be a dynamical system.
Then $\El{}=\Tuple{\Nbn,\W,\Behl{}}$ is an \textbf{asynchronous $l$-complete approximation} of $\E$, if
\begin{inparaenum}[(i)]
 \item 
 $\El{}$ is asynchronously $l$-complete and 
 \item 
 $\Behl{}\supseteq\Beh{}$.\\
\end{inparaenum}
Furthermore, $\ElaMax{}=\Tuple{\Nbn,\W,\BehlaMax{}}$ is the \textbf{strongest asyn\-chro\-nous $l$-complete approximation} of $\E{}$, if 
\begin{inparaenum}[(i)]
 \item $\ElaMax{}$ is an asynchronous $l$-com\-plete approximation of $\E{}$ and
 \item for any asynchronous $l$-complete approximation $\E{}'=\Tuple{\Nbn,\W,\Beh{}'}$ of $\E{}$ it holds that $\BehlaMax{}\subseteq\Beh{}'$.
\end{inparaenum}
\end{definition}

Recall that for an asynchronously $l$-complete system, the domino game gedankenexperiment can be simplified such that at any time $t$ we can attach any domino from the whole domino set $\Ds{l+1}$. This simplified domino game now constructs the unique strongest asynchronous $l$-complete approximation $\BehlaMax{}$.

\begin{lemma}\label{lem:constructElMax_general_TA}
 Let $\E{}=\Tuple{\Nbn,\W,\Beh}$ be a dynamical system. Then the unique strongest asynchronous $l$-complete approximation of $\E{}$ is given by $\ElaMax{}=\Tuple{\Nbn,\W,\BehlaMax{}}$, with
 \begin{equation}\label{equ:lem:constructElMax_general_TA}
\BehlaMax{}:=\SetCompX{\w}{\begin{propConjA}
w\ll{0,l}\in\Beh\ll{0,l}\\
 \AllQ{t\in\Nbn}{w\ll{t,t+l}\in \bigcup_{t'\in\Nbn}\Beh\ll{t',t'+l}}
\end{propConjA}}.
 \end{equation}
  Furthermore, if $\E$ is time invariant then 
  \begin{equation}\label{equ:lem:constructElMax_general:3}
 \BehlaMax{}=\SetCompX{\w\in\W^{\Nbn}}{\AllQ{t\in \Nbn}{\w\ll{t,t+l}\in \Beh\ll{0,l}}}.
 \end{equation}
\end{lemma}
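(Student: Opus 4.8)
The plan is to transcribe the three-part argument of \REFlem{lem:constructElMax_general} to the asynchronous setting, verifying the two requirements of \REFdef{def:AsyncLcompApprox} together with minimality. Throughout, the two conjuncts of \eqref{equ:lem:constructElMax_general_TA} — the initial-pattern condition $w\ll{0,l}\in\Beh\ll{0,l}$ and the matching condition $\AllQ{t\in\Nbn}{w\ll{t,t+l}\in\bigcup_{t'\in\Nbn}\Beh\ll{t',t'+l}}$ — play the role that the single condition $w\ll{t,t+l}\in\Beh\ll{t,t+l}$ of \eqref{equ:lem:constructElMax_general:2} played in the synchronous proof.

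First I would prove the approximation property $\Beh\subseteq\BehlaMax{}$: for $w\in\Beh$ the initial-pattern conjunct is immediate, and for every $t$ one has $w\ll{t,t+l}\in\Beh\ll{t,t+l}\subseteq\bigcup_{t'\in\Nbn}\Beh\ll{t',t'+l}$, so $w$ meets both conjuncts of \eqref{equ:lem:constructElMax_general_TA}. This inclusion is the workhorse: it upgrades the two obvious $\subseteq$-relations read off from \eqref{equ:lem:constructElMax_general_TA} to the restriction identities $\BehlaMax{}\ll{0,l}=\Beh\ll{0,l}$ and $\bigcup_{t'\in\Nbn}\BehlaMax{}\ll{t',t'+l}=\bigcup_{t'\in\Nbn}\Beh\ll{t',t'+l}$. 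The two $\supseteq$-directions hold precisely because every $w\in\Beh$ already lies in $\BehlaMax{}$, so any restriction of a signal in $\Beh$ is also a restriction of a signal in $\BehlaMax{}$.

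With these identities the asynchronous $l$-completeness of $\ElaMax{}$ is immediate: inserting them into \eqref{equ:lcnessSC:1} written for $\BehlaMax{}$ turns its left-hand side into exactly the condition defining $\BehlaMax{}$ in \eqref{equ:lem:constructElMax_general_TA}, so the biconditional of \REFdef{def:lcnessSC_TA} holds by construction. For minimality I would fix any asynchronous $l$-complete approximation $\E{}'=\Tuple{\Nbn,\W,\Beh{}'}$; from $\Beh\subseteq\Beh{}'$ one gets $\Beh\ll{0,l}\subseteq\Beh{}'\ll{0,l}$ and $\bigcup_{t'\in\Nbn}\Beh\ll{t',t'+l}\subseteq\bigcup_{t'\in\Nbn}\Beh{}'\ll{t',t'+l}$, so any $w\in\BehlaMax{}$ satisfies the left-hand side of \eqref{equ:lcnessSC:1} for $\Beh{}'$; asynchronous $l$-completeness of $\E{}'$ then forces $w\in\Beh{}'$, i.e.\ $\BehlaMax{}\subseteq\Beh{}'$. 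These three facts make $\ElaMax{}$ a strongest asynchronous $l$-complete approximation, and minimality identifies $\BehlaMax{}$ as the unique smallest such behavior, giving uniqueness exactly as in \REFlem{lem:constructElMax_general}.

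For the time-invariant addendum I would invoke \REFrem{rem:firstLine} (equivalently \REFlem{lem:lcnessTI_TA}): time invariance gives $\bigcup_{t'\in\Nbn}\Beh\ll{t',t'+l}=\Beh\ll{0,l}$, whence the initial-pattern conjunct becomes the $t=0$ instance of the matching conjunct and is redundant, collapsing \eqref{equ:lem:constructElMax_general_TA} to \eqref{equ:lem:constructElMax_general:3}. I expect the only non-formal point to be the $\supseteq$-directions of the two restriction identities, which are exactly where the approximation property $\Beh\subseteq\BehlaMax{}$ is needed; everything else is unwinding of definitions, so I would establish that inclusion before turning to asynchronous $l$-completeness.
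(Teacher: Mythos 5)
Your proposal is correct and matches the paper's intent exactly: the paper's own proof simply states that the first part \enquote{follows the same lines as the proof of \REFlem{lem:constructElMax_general}} and that the second part follows from \REFlem{lem:lcnessTI_TA}, and what you have written is precisely that argument carried out — the three steps (approximation property, asynchronous $l$-completeness via the restriction identities, minimality and hence uniqueness) mirror steps (i)--(iii) of the synchronous proof, with the pair of conjuncts in \eqref{equ:lem:constructElMax_general_TA} replacing the single synchronous condition, and the time-invariant addendum reduced to $\bigcup_{t'\in\Nbn}\Beh\ll{t',t'+l}=\Beh\ll{0,l}$. Your explicit ordering (establishing $\Beh\subseteq\BehlaMax{}$ first, since the $\supseteq$-directions of the restriction identities depend on it) is a careful touch that the paper's synchronous proof leaves implicit, but it is the same proof.
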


\begin{proof}
The proof of the first part follows the same lines as the proof of \REFlem{lem:constructElMax_general}, and the second part follows directly from \REFlem{lem:lcnessTI_TA}.
\end{proof}

As a direct consequence of \REFlem{lem:realizationTA}, the strongest asynchronous $l$-complete approximation $\ElaMax{}=\Tuple{\Nbn,\W,\BehlaMax{}}$ of any dynamical system $\E{}=\Tuple{\Nbn,\W,\Beh}$ can be represented by the state space system constructed in \REFlem{lem:CorrPastIndSS}, denoted by  $\ElaMaxS{}=\Tuple{\Nbn,\W,\X,\BehlaMaxS{}}$ which can be realized by the FSM $\Psys=\Tuple{\X,\W,\tr,\Xo{}}$ given in \REFlem{lem:realizationTA}.\\
Furthermore, as a direct consequence from Remark~\ref{rem:STIversusTI}, both approximation techniques coincide for strictly time invariant systems. 

\begin{lemma}\label{lem:lcnessTI_Approx}
Let $\E=\Tuple{\Nbn,\WT,\Beh}$ be a strictly time invariant dynamical system and $l\in\Nbn$.
Then its strongest synchronous $l$-complete approximation $\ElMax{}$ and its strongest asynchronous $l$-complete approximation $\ElaMax{}$ are identical, i.e., $\ElMax{}=\ElaMax{}$.
\end{lemma}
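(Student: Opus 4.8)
The plan is to prove the equality of the two systems by comparing the explicit set-builder characterizations of their behaviors supplied by the two construction lemmas, rather than reasoning abstractly about minimality. Since $\ElMax{}$ and $\ElaMax{}$ share the time axis $\Nbn$ and the signal space $\W$ by definition, it suffices to show $\BehlMax{}=\BehlaMax{}$.

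First I would record the trivial but crucial observation that strict time invariance is stronger than time invariance: $\sigma\Beh=\Beh$ immediately gives $\sigma\Beh\subseteq\Beh$. This is what makes both closed forms available simultaneously. With this in hand, the hypothesis of the second part of \REFlem{lem:constructElMax_general} (strict time invariance) is met, so $\BehlMax{}=\SetCompX{\w\in\W^{\Nbn}}{\AllQ{t\in\Nbn}{\w\ll{t,t+l}\in\Beh\ll{0,l}}}$. Likewise, since strict time invariance entails ordinary time invariance, the hypothesis of the second part of \REFlem{lem:constructElMax_general_TA} is also met, so $\BehlaMax{}$ is given by exactly the same set expression.

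The two right-hand sides being literally identical, I conclude $\BehlMax{}=\BehlaMax{}$ and hence $\ElMax{}=\ElaMax{}$. Concretely the proof would read: because $\E$ is strictly time invariant it is in particular time invariant, so the time-invariant characterizations of $\BehlMax{}$ and $\BehlaMax{}$ in \REFlem{lem:constructElMax_general} and \REFlem{lem:constructElMax_general_TA} coincide, which is the assertion; this is precisely the content flagged in \REFrem{rem:STIversusTI}.

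I do not expect a genuine obstacle here. The only point requiring care is the alignment of hypotheses: the synchronous construction only collapses to the $\Beh\ll{0,l}$-form under \emph{strict} time invariance, whereas the asynchronous one does so already under ordinary time invariance, so one must explicitly invoke the implication from strict to ordinary time invariance to bring both behaviors into the same form. An alternative, slightly heavier route would argue by minimality — every asynchronously $l$-complete approximation is synchronously $l$-complete, giving $\BehlMax{}\subseteq\BehlaMax{}$ for free, and one would then show that $\BehlMax{}$ is itself asynchronously $l$-complete when $\E$ is strictly time invariant to obtain the reverse inclusion — but since establishing that last fact essentially reduces to the same formula comparison, the direct route above is preferable.
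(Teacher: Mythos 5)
Your proof is correct and takes essentially the same approach as the paper, whose entire proof of this lemma reads that the claim follows directly from \REFlem{lem:constructElMax_general} and \REFlem{lem:constructElMax_general_TA}. Your write-up merely makes explicit what that means: strict time invariance triggers the second part of the former lemma, and (being stronger than ordinary time invariance) also triggers the second part of the latter, so the two closed-form characterizations of $\BehlMax{}$ and $\BehlaMax{}$ are literally the same set expression.
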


\begin{proof}
Follows directly from \REFlem{lem:constructElMax_general} and \REFlem{lem:constructElMax_general_TA}.
\end{proof}

\begin{remark}
 Recall from Remark~\ref{rem:AnmerkungMoorRaisch:2} that the weaker notion of $l$-completeness from \cite{MoorRaisch1999} coincides with the property of asynchronous $l$-completeness for time-invariant systems. Therefore, the strongest $l$-complete approximation of a time invariant system $\E$ suggested in \cite{MoorRaisch1999} is identical to its strong\-est asynchronous $l$-complete approximation $\ElaMax{}$ introduced in \REFdef{def:AsyncLcompApprox}. The latter is, by definition, also a synchronous $l$-complete approximation, but not (unless $\E$ is strictly time invariant) necessarily the strongest one.
\end{remark}

\begin{example}
\normalfont
The behaviors constructed by the domino games discussed in Example~\ref{exp:alcomplete} characterize the strongest asynchro\-nous $1$, $2$ and $3$-complete approximations for the system $\E$ in \eqref{equ:exp:lcomplete}, respectively. Realizations for the strongest asynchronous $1$- and $2$-complete approximations using the constructions from Lemma~\ref{lem:realizationTA} are shown in Figure~\ref{fig:FSM_12}. As the system $\E$ is asynchronously $3$-complete, its behavior coincides with that of its strongest $3$-complete approximation; hence the corresponding FSM is shown in Figure~\ref{fig:FSM}. Observe that the FSM realizing $\Sigma^{1^\uparrow}_{S}$ and the tFSM realizing $\Sigma^{1^\Uparrow}_{S}$ depicted in Figure~\ref{fig:FSM_12} (left) and Figure~\ref{fig:FSMtype} (left), respectively, coincide. This is a direct consequence from Remark~\ref{rem:STIversusTI}, since $\E^{1^\Uparrow}$ is strictly time invariant as discussed in Example~\ref{exp:Beh1strictlyTI}. 
 \end{example}

\begin{figure}[htb]
\begin{center}
 \begin{tikzpicture}[auto,scale=1.5]
    \begin{small} 
\node (dummy) at (-0.5,-0) {};  
\node[state] (p0) at (0,-0.2) {$\lambda$};
\node[state] (a) at (-0.4,-1) {$a$};
\node[state] (b) at (0.4,-1) {$b$};

\SFSAutomatEdge{dummy}{}{p0}{}{}
\SFSAutomatEdge{p0}{a}{a}{}{swap}
\SFSAutomatEdge{p0}{b}{b}{}{}
\SFSAutomatEdge{a}{a}{a}{loop left}{}
\SFSAutomatEdge{a}{b}{b}{}{}
\SFSAutomatEdge{b}{a}{a}{bend left}{}

\end{small} 
    \begin{small} 
\node (dummy) at (4.8,0.6) {};  
\node[state] (p0) at (5.3,0.4) {$\lambda$};
\node[state] (a) at (4.8,-0.2) {$a$};
\node[state] (b) at (6,-0.2) {$b$};
\node[state] (aa) at (4.6,-1) {$aa$};
\node[state] (ab) at (5.35,-1) {$ab$};
\node[state] (ba) at (6.1,-1) {$ba$};
\SFSAutomatEdge{dummy}{}{p0}{}{}
\SFSAutomatEdge{p0}{a}{a}{}{swap,pos=0.4}
\SFSAutomatEdge{p0}{b}{b}{}{pos=0.4}
\SFSAutomatEdge{a}{a}{aa}{}{swap}
\SFSAutomatEdge{a}{b}{ab}{}{}
\SFSAutomatEdge{b}{a}{ba}{}{pos=0.8}
\SFSAutomatEdge{aa}{b}{ab}{}{}
\SFSAutomatEdge{ab}{a}{ba}{}{}
\SFSAutomatEdge{ba}{a}{aa.south east}{bend left}{pos=0.15}
\SFSAutomatEdge{aa}{a}{aa}{loop left}{}
\end{small} 
\end{tikzpicture}
\end{center}
 \vspace{-0.3cm}
 \caption{FSMs realizing the strongest asynchronous $1$-complete approximation $\Sigma^{1^\uparrow}_{S}$ (left) and the strongest asynchronous $2$-complete approximation $\Sigma^{2^\uparrow}_{S}$ (right) of \eqref{equ:exp:lcomplete}.}\label{fig:FSM_12}
 \end{figure}
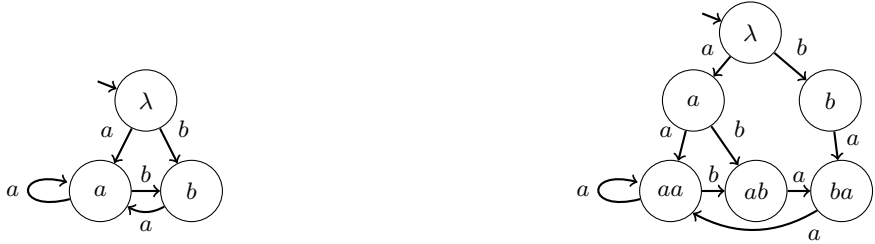

Summarizing the results of our running example, we have the following: the system under consideration, $\Sigma=(\Nbn,\allowbreak\W,\allowbreak\Beh)$ in \eqref{equ:exp:lcomplete}, is time invariant but not strictly time invariant. $\Sigma$ is synchronously $2$-complete and can therefore be realized by the tFSM $Q^2$ depicted in Figure~\ref{fig:FSMtype} (right). It is asynchronously $3$-complete (but not asynchronous $2$-complete as $\E$ is not strictly time invariant) and can therefore be realized by an FSM $P$ depicted in Figure~\ref{fig:FSM}. Its strongest asynchronous $2$-compete approximation $\Sigma^{2^\uparrow}$ is of cause also a synchronous $2$-complete approximation of $\E$, but not the strongest one. In fact, as $\Sigma$ is synchronously $2$-complete and asynchronously $3$-complete, $\Sigma^{2^\uparrow}=\Sigma^{3^\Uparrow}=\Sigma$, and therefore $\Beh^{2^\uparrow}=\Beh^{3^\Uparrow}=\Beh\subset\Beh^{2^\Uparrow}$.

\section{Conclusion}
Strongest $l$-complete approximations for time invariant systems were introduced in \cite{MoorRaisch1999}. However, the employed notion of $l$-completeness is a weaker version of the original $l$-completeness property defined in \cite{Willems1989}. To resolve the resulting inconsistencies, and also to address a wider system class, the procedure suggested in \cite{MoorRaisch1999} can be adapted in a straightforward way using the original $l$-completeness notion, capturing also time variant systems. This, not surprisingly, leads to realizations with time dependent next state relations. To address this, inspired by \cite{JuliusSchaft2005}, we have extended the well-known concepts of state property, memory span and $l$-completeness and have introduced asynchronous versions of these concepts. To clearly distinguish between the new, weaker versions and the original ones, the latter are referred to as synchronous properties.\\
Based on these extensions, we have proposed a new approximation technique,  called strongest asynchronous $l$-complete approximation. For systems with finite external signal space, it generates a finite state machine (FSM) as realization of the approximation. For time invariant systems, it produces the same approximation as \cite{MoorRaisch1999}, however, the mentioned inconsistencies are resolved.
The strongest asynchronous $l$-complete approximation of a given system is also a synchronous $l$-complete approximation, but not necessarily the strongest one. For strictly time invariant systems, we have shown that the concepts of strongest synchronous and strongest asynchronous $l$-complete approximations coincide.

\section*{References}


\begin{thebibliography}{10}
\expandafter\ifx\csname url\endcsname\relax
  \def\url#1{\texttt{#1}}\fi
\expandafter\ifx\csname urlprefix\endcsname\relax\def\urlprefix{URL }\fi
\expandafter\ifx\csname href\endcsname\relax
  \def\href#1#2{#2} \def\path#1{#1}\fi

\bibitem{RamWon}
P.~Ramadge, W.~Wonham, Supervisory control of a class of discrete event
  processes, in: A.~Bensoussan, J.~Lions (Eds.), Analysis and Optimization of
  Systems, Vol.~63 of Lecture Notes in Control and Information Sciences,
  Springer Berlin Heidelberg, 1984, pp. 475--498.

\bibitem{RamWon1989}
P.~Ramadge, W.~Wonham, The control of discrete event systems, Proceedings of
  the IEEE 77 (1989) 81--98.

\bibitem{MoorRaisch1999}
T.~Moor, J.~Raisch, Supervisory control of hybrid systems within a behavioural
  framework, Systems and Control Letters 38 (1999) 157--166.

\bibitem{AlurHenzingerLaffarrierePappas2000}
R.~Alur, T.~Henzinger, G.~Lafferriere, G.~Pappas, Discrete abstractions of
  hybrid systems, Proceedings of the IEEE 88~(7) (2000) 971 --984.

\bibitem{MoorRaischYoung2002}
T.~Moor, J.~Raisch, S.~O'Young, Discrete Supervisory Control of Hybrid Systems
  Based on l-Complete Approximations, Vol.~12, Kluwer Academic Publishers,
  2002, pp. 83--107.

\bibitem{Pappas2003}
G.~J. Pappas, Bisimilar linear systems, Automatica 39 (2003) 2035--2047.

\bibitem{TabuadaPappas2003}
P.~Tabuada, G.~J. Pappas, From discrete specifications to hybrid control, in:
  Proceedings. 42nd IEEE Conference on Decision and Control, 2003., Vol.~4,
  IEEE, 2003, pp. 3366--3371.

\bibitem{Tabuada2008}
P.~Tabuada, An approximate simulation approach to symbolic control, IEEE
  Transactions on Automatic Control 53~(6) (2008) 1406--1418.

\bibitem{TabuadaBook}
P.~Tabuada, Verification and Control of Hybrid Systems - A Symbolic Approach,
  Vol.~1, Springer, 2009.

\bibitem{Willems1989}
J.~Willems, Models for dynamics, Dynamics Reported 2 (1989) 172--269.

\bibitem{Willems1991}
J.~Willems, Paradigms and puzzles in the theory of dynamic systems, IEEE
  Transactions on Automatic Control 36~(3) (1991) 258--294.

\bibitem{MoorSchmidtWittmannIFAC2011}
T.~Moor, K.~Schmidt, T.~Wittmann, Abstraction-based control for not necessarily
  closed behaviours, in: Proceedings of the 18th IFAC World Congress, 2011, pp.
  6988--6993.

\bibitem{MoorPhd}
T.~Moor, {{Approximationsbasierter Entwurf diskreter Steuerungen für
  gemischtwertige Regelstrecken}}, in: Forschungsberichte aus dem
  {M}ax-{P}lanck-{I}nstitut für {D}ynamik komplexer technischer {S}ysteme,
  Vol.~2, Shaker Verlag, 1999, {D}issertation.

\bibitem{JuliusSchaft2005}
A.~A. Julius, A.~J. van~der Schaft, Bisimulation as congruence in the
  behavioral setting, Proc. 44th IEEE Conf. on Decision and Control, and the
  European Control Conference 2005 (2005) 814--819.

\bibitem{JuliusPhdThesis2005}
A.~A. Julius, On interconnection and equivalence of continuous and discrete
  systems: a behavioral perspective,, Ph.D. thesis, University of Twente
  (2005).

\end{thebibliography}
\end{document}